\documentclass[journal]{IEEEtran}
\usepackage{algorithm,comment}

\usepackage{array,subfigure}
\usepackage{stfloats}
\usepackage{url}
\usepackage{verbatim}
\usepackage{cite}
\usepackage{amsmath,amssymb,amsfonts}
\usepackage{graphicx}
\usepackage{textcomp}
\usepackage[compact]{titlesec}
\usepackage[dvipsnames]{xcolor}
\usepackage[nocomma]{optidef}
\usepackage[acronyms,nonumberlist,nopostdot,nomain,nogroupskip]{glossaries}
\usepackage{algorithm}
\usepackage{algpseudocode}
\usepackage{amsthm}
\usepackage{url}
\usepackage[caption=false,font=normalsize,labelfont=sf,textfont=sf]{subfig}
\hyphenation{op-tical net-works semi-conduc-tor IEEE-Xplore}

\def\BibTeX{{\rm B\kern-.05em{\sc i\kern-.025em b}\kern-.08em
    T\kern-.1667em\lower.7ex\hbox{E}\kern-.125emX}}
    
\usepackage[many]{tcolorbox}
\usepackage{lipsum}

\renewenvironment{IEEEbiography}[1]
  {\IEEEbiographynophoto{#1}}
  {\endIEEEbiographynophoto}

\makeatletter
\def\@IEEEbiography{\vspace{-2\baselineskip}\IEEEbiography}
\makeatother

\definecolor{titlebg}{RGB}{100,22,72}
\definecolor{introbg}{RGB}{0,128,128}

\usepackage[textwidth=2.5cm,backgroundcolor=yellow]{todonotes}

\newcommand{\edits}[1]{\textcolor{black}{#1}}

\newcommand{\revision}[1]{\textcolor{black}{#1}}
\newcommand{\riccardoandfrancesco}[1]{\textcolor{green!60!black}{#1}}

\newcommand{\wb}{\mathbf{w}}
\newcommand{\Nc}{\mathcal{N}}

\newtcolorbox{usecase}[1][]{
  breakable,
  enlarge top by=0.05in,
  enhanced,
  arc=0pt,
  outer arc=0pt,
  colframe=titlebg,
  colback=titlebg!05,
  overlay unbroken and first={
    \node[
      draw=titlebg,
      fill=titlebg,
      rotate=0,
      anchor=north west,
      text=white,
      font=\bfseries
    ]
    at (frame.north west)  
    {#1};
  }
}

\newtcolorbox{mission}[1][]{
  breakable,
  enhanced,
  arc=0pt,
  outer arc=0pt,
  colframe=introbg,
  colback=introbg!05,
  overlay unbroken and first={
    \node[
      draw=introbg,
      fill=introbg,
      rotate=0,
      anchor=north west,
      text=white,
      font=\bfseries
    ]
    at (frame.north west)  
    {#1};
  }
}

\usepackage{xspace}


\newacronym{twt}{TWT}{Target Wake Time}
\newacronym{ofdma}{OFDMA}{Orthogonal Frequency Division Multiple Access}
\newacronym{mimo}{MIMO}{multiple-input multiple-output}
\newacronym{mumimo}{MU-MIMO}{Multi-user MIMO}
\newacronym{tsn}{TSN}{Time Sensitive Networking}
\newacronym{bss}{BSS}{basic service set}
\newacronym{ap}{AP}{Access Point}
\newacronym{sta}{STA}{station}
\newacronym{ru}{RU}{Resource Unit}
\newacronym{cbw}{CBW}{Channel Bandwidth}
\newacronym{he}{HE}{High Efficiency}
\newacronym{ul}{UL}{uplink}
\newacronym{dl}{DL}{downlink}
\newacronym{sp}{SP}{Service Period}
\newacronym{mcs}{MCS}{Modulation and Coding Scheme}
\newacronym{aoi}{AoI}{Age of Information}
\newtheorem{theorem}{Theorem}

\makeatletter
\def\ps@IEEEtitlepagestyle{%
  \def\@oddfoot{\mycopyrightnotice}%
  \def\@oddhead{\hbox{}\@IEEEheaderstyle\leftmark\hfil\thepage}\relax
  \def\@evenhead{\@IEEEheaderstyle\thepage\hfil\leftmark\hbox{}}\relax
  \def\@evenfoot{}%
}
\def\mycopyrightnotice{%
  \begin{minipage}{\textwidth}
  \centering \scriptsize
  \copyright~2025 IEEE.  Personal use of this material is permitted.  Permission from IEEE must be obtained for all other uses, in any current or future media, including reprinting/republishing this material for advertising or promotional purposes, creating new collective works, for resale or redistribution to servers or lists, or reuse of any copyrighted component of this work in other works.
  \end{minipage}
}
\makeatother

\begin{document}

\title{Target Wake Time Scheduling for Time-sensitive  and Energy-efficient  Wi-Fi Networks

\author{Fabio Busacca,~\IEEEmembership{Member,~IEEE,}
        Corrado Puligheddu,~\IEEEmembership{Member,~IEEE,}
        Francesco Raviglione,~\IEEEmembership{Member,~IEEE,}
        Riccardo Rusca,~\IEEEmembership{Member,~IEEE,}
        Claudio Casetti,~\IEEEmembership{Senior Member,~IEEE,}
        Carla Fabiana Chiasserini,~\IEEEmembership{Fellow,~IEEE,}
        and~Sergio Palazzo,~\IEEEmembership{Senior Member,~IEEE}}
\thanks{F. Busacca and S. Palazzo are with University of Catania, Italy.}
\thanks{C. Casetti, C. F. Chiasserini, C. Puligheddu, F. Raviglione, and R. Rusca  are with
Politecnico di Torino, Italy.}

\thanks{This work was supported by the European Commission through Grant No. 101095890 (Predict-6G project), and by the EU under the Italian NRRP of NextGenerationEU, through the RESTART program (PE0000001) and the MOST CNMS  (CN00000023). This manuscript reflects only the authors’ views and opinions, neither the EU nor the EC can be considered responsible for them.\\
This is an extended version of our IEEE PIRMC 2024 paper~\cite{puligheddu2024target}.}
}

\maketitle

\begin{abstract}
Time Sensitive Networking (TSN) is fundamental for the reliable, low-latency networks that will enable the Industrial Internet of Things (IIoT). Wi-Fi has historically been considered unfit for TSN, as channel contention and collisions prevent deterministic transmission delays. However, this issue can be overcome by using Target Wake Time (TWT), which 
enables the access point to instruct Wi-Fi stations to wake up and transmit in non-overlapping TWT Service Periods (SPs), and sleep in the remaining time.
In this paper, we first formulate the TWT Acceptance and Scheduling Problem (TASP), with the objective  to schedule TWT SPs that maximize traffic throughput and energy efficiency while respecting Age of Information (AoI) constraints. Then, due to TASP being NP-hard, we propose the TASP Efficient Resolver (TASPER), a heuristic strategy to find near-optimal solutions efficiently. {\color{black}Using a TWT  simulator based on ns-3, we compare TASPER to several baselines, including HSA, a state-of-the-art solution originally designed for WirelessHART networks.} We 
demonstrate that TASPER obtains up to 24.97\% lower mean transmission rejection cost and saves up to 14.86\% more energy compared to the leading baseline, ShortestFirst, in a challenging, large-scale scenario. \revision{Additionally, when compared to HSA, TASPER also reduces the energy consumption by 34\% and reduces the mean rejection cost by 26\%. }
Furthermore, we validate TASPER on our IIoT testbed, which comprises 10 commercial TWT-compatible stations, observing that our solution admits more transmissions than the best baseline strategy, without violating any AoI deadline.
\end{abstract}

\begin{IEEEkeywords}
Target wake time,  time-sensitive networking, traffic scheduling, Industrial Internet of Things
\end{IEEEkeywords}

\section{Introduction}

The transformation brought about by technologies like the Industrial Internet of Things (IIoT), Artificial Intelligence (AI), and cyber-physical systems is changing the way industries and production chains operate~\cite{munirathinam2020industry}. In this new industrial paradigm, the traditional boundaries of manufacturing are redefined through smart factories, where real-time data, interconnected systems, and automation are seamlessly integrated. With IIoT at its core, this transformation ushers in a wave of intelligent sensors, actuators, and data-driven decision systems that promise unprecedented efficiency and precision in industrial processes.

Reliable, low-latency communication is essential in smart factories 
{\color{black} where deployments may involve even hundreds of IIoT devices exchanging time-sensitive data to monitor production lines, detect anomalies, and coordinate actions across complex processes.}
As a result, communication delays and network reliability become critical, not only to maintain efficiency but also to avoid costly interruptions that could lead to production downtime, equipment damage, or even safety hazards. However, meeting these communication demands is challenging in a wireless environment, where signals are subject to interference and collisions that can delay critical messages.

While ARQ (Automatic Repeat reQuest)  mechanisms can sometimes recover lost transmissions, they cannot ensure that data remains relevant. A key challenge is managing the \gls{aoi}, i.e., the time elapsed between data generation and its reception at the destination. High AoI can make data obsolete, especially in cases where an alert about a production anomaly arrives too late for effective intervention. For example, delayed sensor data could leave operators with no choice but to halt production, leading to significant financial and operational costs.

To overcome these challenges, traditional approaches might suggest isolating each device on a dedicated wireless channel, but this is rarely feasible given the limited spectrum and high device density in IIoT settings. Thus, a more sophisticated approach is needed to manage channel access while minimizing latency and ensuring data freshness.

\gls{twt}, a feature introduced with \mbox{Wi-Fi}~6 (IEEE 802.11ax), offers a promising solution for managing time-sensitive communication in the IIoT. TWT allows an access point (AP) to pre-define wake-up times for each \gls{sta} in the network, scheduling them to transmit and receive data in dedicated slots and enabling them to sleep in the meantime. This level of control provides several critical advantages: 

\begin{itemize}
\item Enhanced energy efficiency: Many IIoT devices run on battery power, and frequent data exchange can quickly drain their energy. With TWT, devices can remain in a low-power mode and only wake up for scheduled transmissions, significantly extending their operational life.
\item Reduced channel contention: By allocating exclusive transmission times for each device, TWT helps  prevent contention, reducing the chances of collisions and network delays. This allows for more deterministic communication, essential for safety-critical and real-time applications in manufacturing.
\item Improved predictability and network stability: TWT scheduling makes it possible to better anticipate network load and reduce jitter, which can be vital for industrial processes that depend on synchronized device interactions and precise timing.
\end{itemize}

While TWT can address the issues of timing and energy-efficient operation in IIoT, it leaves open the question of \textit{how to strategically assign wake times to devices} to best balance data freshness, energy savings, and throughput. Without a systematic scheduling approach, there is no guarantee that TWT allocations will meet the demanding latency and reliability requirements of time-sensitive applications. Therefore, a robust TWT scheduling mechanism becomes essential for IIoT networks that use IEEE 802.11ax. Such a mechanism should:

\begin{itemize}
\item[{\em (i)}] Ensure that devices wake up only when necessary, reducing energy consumption;
\item[{\em (ii)}] Maximize the admission rate of critical and timely data flows to prevent stale or outdated information;
\item[{\em (iii)}] Minimize AoI, ensuring that time-sensitive information remains fresh and actionable.
\end{itemize}

\revision{In this work, we thus focus  on making the TWT mechanism suitable to the stringent requirements of IIoT traffic by scheduling device wake times to: optimize energy saving,  accommodate as many traffic flows as possible while prioritizing critical data, and  guarantee a timely delivery of information.}
Hence, with this goal in mind,  we introduce the TWT Acceptance and Scheduling Problem Efficient Resolver (TASPER), an algorithmic solution for energy-efficient, AoI-constrained TWT scheduling in Wi-Fi networks. 

Our contributions can be summarized as follows:
    
     $\bullet$\, We model a  TWT-enabled  Wi-Fi networks, where \glspl{sta} generate AoI-constrained traffic. Such a model captures the relationship between the traffic generated and transmitted by the Wi-Fi \glspl{sta}, the \gls{sta} power states, and the associated energy consumption.
    
     $\bullet$\, We build upon this model to define the TWT Acceptance and Scheduling Problem (TASP), whose objective is to minimize the rejected (therefore, not scheduled) traffic and the stations' energy consumption while  satisfying the maximum AoI constraint. Finding a solution to the TASP implies determining  \textit{(i)} whether or not to admit the traffic, and \textit{(ii)} the target wake time(s) of each STA in the network. Notably,  the possibility to reject traffic may allow for a schedule with sufficient room for higher-priority traffic.
    
     $\bullet$\, As the TASP is NP-hard, to solve it efficiently we devise a novel heuristic strategy \revision{that efficiently produces high-quality scheduling decisions while accounting for energy consumption. The algorithm is designed to be lightweight and fast, making it suitable for timely execution in dynamic IIoT environments}.
     
    $\bullet$\, We enhance the ns-3-based simulator for TWT by Venkateswaran et al.~\cite{ns-3-twt-shyam-2024-accepted-paper}, introducing several novel features to simulate advanced TWT scheduling approaches such as TASPER. The simulator, that we call \emph{ns-3-twt}, includes the possibility to simulate deadlines, different classes of devices and multiple TWT networks.
    
    $\bullet$\, We thoroughly evaluate TASPER and other baselines, leveraging ns-3-twt and show that it offers substantial improvements over the other strategies. Specifically, TASPER obtains up to 24.97\% lower mean rejection cost and saves up to 14.86\% more energy compared to the best baseline. {\color{black} Additionally, it reduces  energy consumption up to 34\% and the mean rejection cost of up to 26\% when compared to HSA, a state-of-the-art approach originally designed for WirelessHART networks.}

    $\bullet$\, We design and implement an IIoT testbed comprising 10 commercial TWT-compatible STAs. Through our testbed, we validate TWT as a means for energy-efficient traffic scheduling without incurring channel contention delays, observing 49\% energy saving and 16\% lower median AoI. Also, we remark that TASPER admits more transmissions than the best baseline strategy, confirming  its effectiveness in scheduling transmissions.

The rest of the paper is organized as follows. Sec.\,\ref{sec:twt_basics} {\color{black}highlights the advantages of the TWT feature and motivates the need for a transmission scheduling that accounts for traffic generation times and deadlines. Sec.\,\ref{sec:system_model} introduces our  system model and the TWT scheduling problem, which is shown to be NP-hard. 
}
Sec.\,\ref{sec:TASPER_algo} presents the key ideas and principles behind the TASPER scheduling algorithm we propose. After describing in Sec.\,\ref{sec:methodology} our ns-3-based simulation framework and evaluation methodology, and the scheduling baselines considered as  benchmarks for TASPER,  Sec.\,\ref{sec:peva} shows the numerical results we obtained. Sec.\,\ref{sec:exp}  introduces the experimental testbed we set up with commercial off-the-shelf devices and  validates the feasibility and good performance of our solution. Finally, Sec.\,\ref{sec:related_work} discusses some relevant related work, and Sec.\,\ref{sec:conclusions} draws our conclusions.

{\color{black}
\section{TWT Operations: The Need for a Scheduling Strategy}
\label{sec:twt_basics}

TWT enables the \mbox{Wi-Fi} \glspl{sta} to negotiate awake periods with the \gls{ap} to exchange traffic and to enter a doze mode during the remaining time to save energy~\cite{802.11ax}. 
One of the main advantages of \gls{twt} is therefore energy efficiency. 
}
Indeed, as the STAs are able to enter a sleep mode while not involved in the transmission or reception of data, they can save a substantial amount of energy that would instead be spent turning the radio into receive  mode for data that is not destined to them, or in other operations such as Clear Channel Assessment  
\cite{802.11ax,ns-3-dev-docs}.
The advantage of using TWT in terms of energy savings, which is especially important in energy-constrained battery-powered IIoT sensors, is highlighted in Fig.~\ref{fig:stateplot}.
The plots show the output of  our ns-3-twt simulator (described in Sec.~\ref{sub:sim_ns-3-twt}) in a scenario with 8 \glspl{sta}, that need to send their traffic to an AP, modeling IIoT sensors of different kinds and with different energy figures. The traffic consists of a 1500-byte UDP packet encoding sensor data to be transmitted to the AP. Each STA $m$ ($m{=}1,\ldots,8$) receives a UDP packet from its higher layers at a time equal to $(m\cdot 5)$\,ms. We also consider that the transmission periodicity of the STAs is known (which is likely to happen for common IIoT sensors that transmit their data periodically); thus, the STAs can conveniently employ a \gls{twt} mechanism in which each \gls{twt} SP starts as soon as a UDP packet is received at a STA from its higher layers. Each TWT SP is set to last slightly more than 5\,ms, just for the sake of showing a clear illustrative example.
Fig.~\ref{fig:stateplot} shows the difference with (a) and without (b) \gls{twt}, depicting the evolution of the physical layer states of STA\,2 over time, and including the total energy consumption in a reference 32-ms period. One can observe that, with \gls{twt}, the \gls{sta} spends most of the time in the very low energy  sleep state, except for the time in which it has to receive the beacon from the AP (the receive  mode right after the beginning of the simulation) and during its \gls{sp}, in which it transmits its data. Conversely, without TWT, the STA never enters the \textit{sleep} state, and alternates between the \textit{idle} and the \textit{receive} state.  Specifically, such states are entered every time the STA is overhearing, i.e.,  another STA transmits  data  to the AP. Since the idle state is associated with a higher  energy consumption than the sleep state,  
employing \gls{twt} results in a significant $3.4\times$ energy saving  over a reference period of just 32\,ms.

\begin{figure}[t]
\centering
\subfigure[With TWT]{\includegraphics[width=0.35\textwidth]{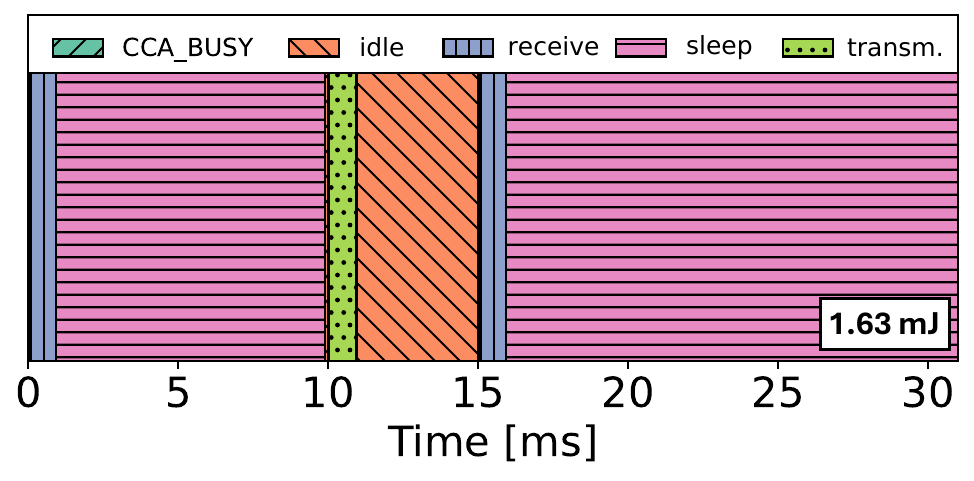}
}
\subfigure[Without TWT]{\includegraphics[width=0.35\textwidth]{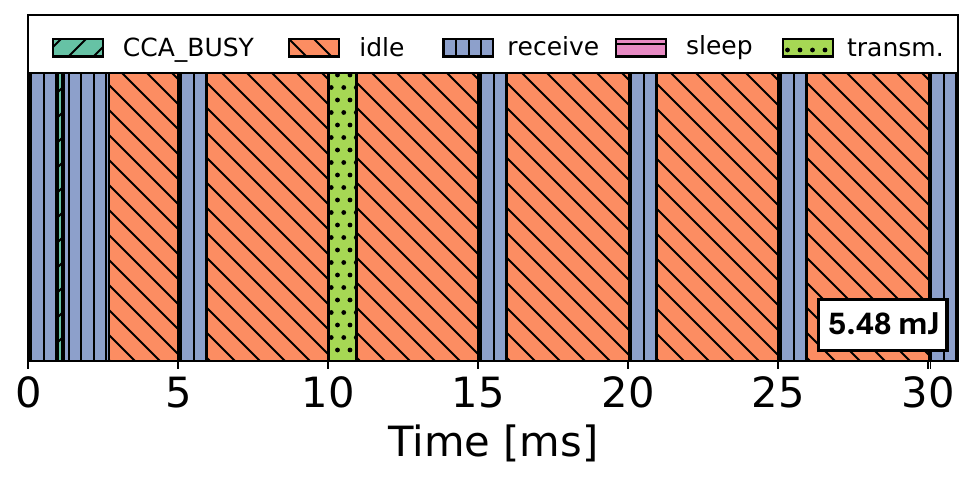}
}
    \caption{Example of the temporal evolution of STA's power states with (a)  and without (b) TWT, over a period of 32\,ms. The total energy consumed during the time period is reported at the bottom right. The example includes 8 STAs, each with a 1,500-byte UDP packet to transmit; the considered STA starts  transmitting at time 10\,ms.}
    \label{fig:stateplot}
\end{figure}

Besides reducing energy consumption, \gls{twt} can enable time-sensitive Wi-Fi networks, i.e., it can provide deterministic latency for the traffic generated by the Wi-Fi nodes. 
Indeed, without TWT, nodes that generate traffic with different requirements and deadlines need to contend for the shared wireless medium. This is especially true when the overall offered traffic is bursty, 
which will make the nodes more likely to compete for channel access, and possibly defer their transmissions.
This makes the timing of channel access and successful data transmissions non-deterministic. Such a level of unpredictability may be unacceptable, especially for time-critical Industry 4.0 use cases  often involving the control of time-sensitive machinery. This concept is demonstrated in Fig.~\ref{fig:twtvsnotwt}, showing the average delay experienced by 8 STAs with TWT enabled and disabled, obtained through our ns-3-twt simulator, as the payload size increases. All STAs generate traffic at the same time. By scheduling non-overlapping windows for each STA, the TWT approach yields a data delivery delay that is almost equal to the case where there would not be any contention, with lower jitter and higher determinism (the confidence intervals are indeed very small). When instead TWT is disabled, the data delay becomes much higher, and its variation very evident. 

In spite of the above advantages, it is important to remark that {\em TWT alone does not provide any guaranteed AoI}, as \mbox{Wi-Fi} packets could wait in the transmission queue for more than what the maximum tolerable AoI allows, before they are scheduled in a TWT \gls{sp}. This is exemplified in Fig.~\ref{fig:naivevsoptimum}, which presents in plot (a) an instance of a problem of TWT acceptance and scheduling at the AP. For a scheduling solution to be feasible, the transmissions (blue bars) cannot overlap in time. Rather, they have to be scheduled sequentially within the white boxes, indicating the transmission ranges that respect the data deadlines. When a naive first-input-first-output (FIFO) approach is used in plot (b), only a few transmissions can be scheduled, while several are not since they would miss their deadline. In contrast, by accounting for the traffic deadlines, the number of transmissions can be maximized (see plot (c)), doubling the number of transmissions that can be accommodated.  
{\color{black} The above observations highlight that, in the presence of energy-constrained devices like IIoTs and strict latency requirements in data packet delivery, {\em it is imperative to develop a scheduling strategy for Wi-Fi networks that accounts for traffic generation times and deadlines}.}

\begin{figure}
  \centering
  \includegraphics[width=0.40\textwidth]{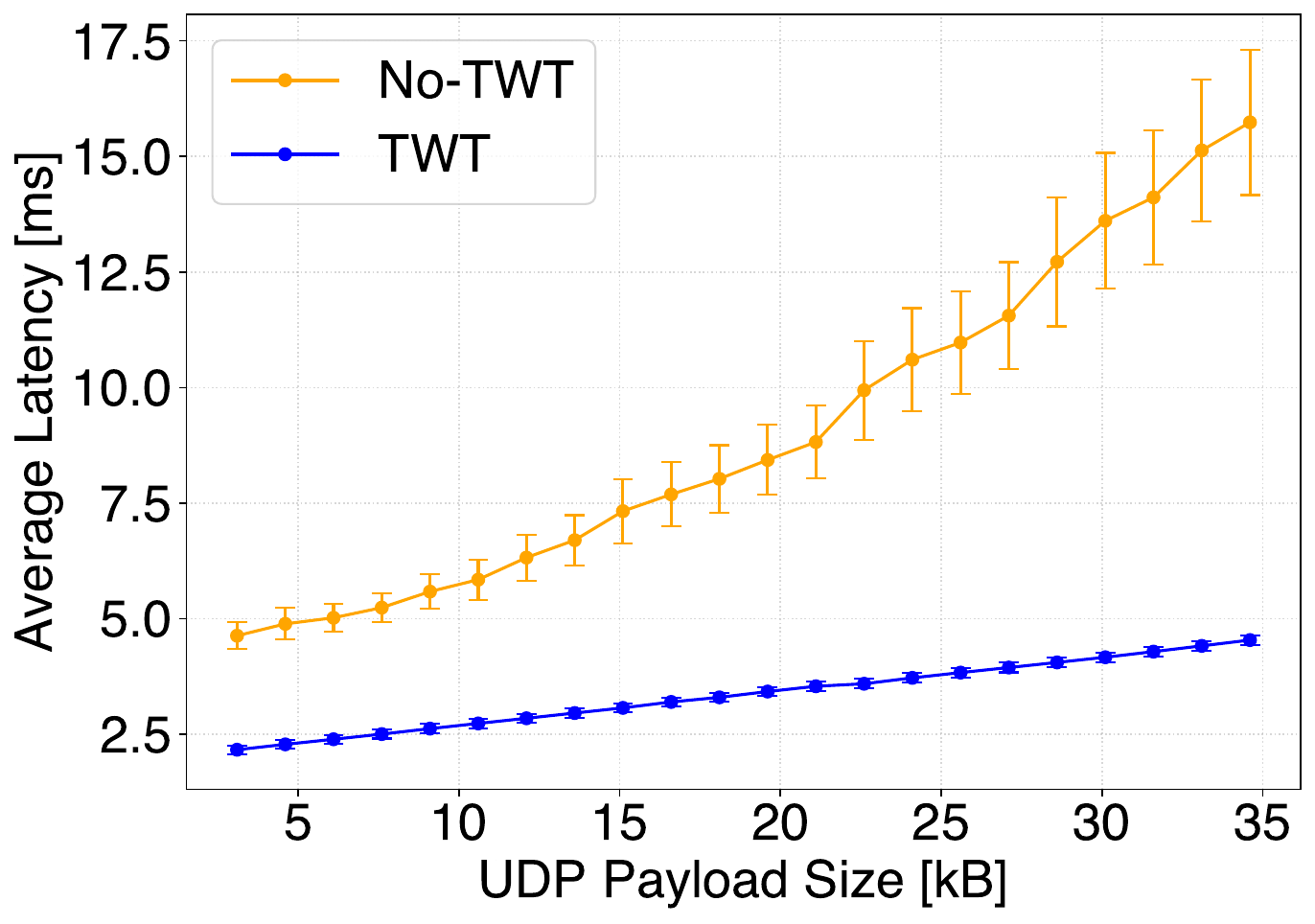}
  \caption{Comparison of the average STA transmission delay measured with and without TWT, when used to avoid channel contention. Error bars show the 95\% confidence intervals.}
  \label{fig:twtvsnotwt}
\end{figure}

\begin{figure}
  \centering
\subfigure[Problem instance]{
\includegraphics[width=0.47\textwidth]{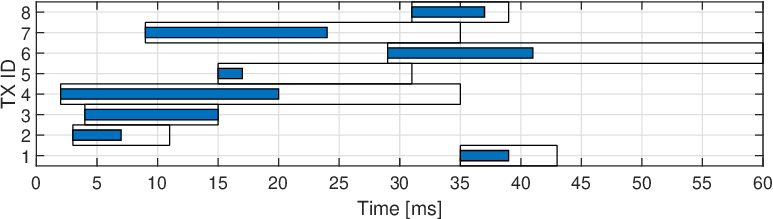}
}
\subfigure[FIFO solution]{
\includegraphics[width=0.47\textwidth]{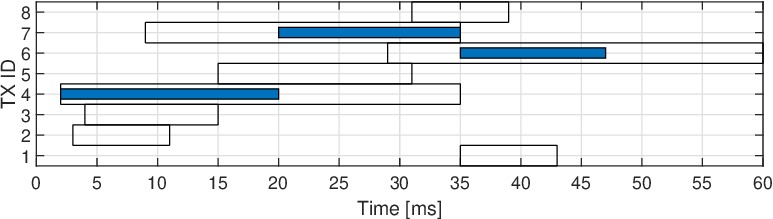}
    }
  \subfigure[Optimum]{\includegraphics[width=0.47\textwidth]{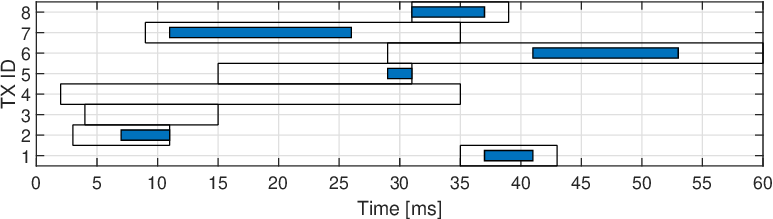}
}
   \caption{TASP example (a) and its solution using a FIFO strategy (b) and a strategy maximizing the number of transmissions (c). The blue bars indicate the transmission times, which have to start and end inside the white boxes to meet the traffic deadlines.}
    \label{fig:naivevsoptimum}
\end{figure}

{\color{black}
\section{System Model and Problem Formulation}
\label{sec:system_model}

This section first introduces our system model and assumptions, and then presents the TASP optimization problem to be solved at the AP.

\subsection{System Model and Assumptions}
\label{sub:model_assumptions}

We consider an  IEEE Wi-Fi\,6  \gls{he} \gls{bss} with  an  \gls{ap} and  $M$  \glspl{sta}\footnote{{\color{black}We consider individual TWT sessions. 
However, the standard \cite{802.11ax} also defines broadcast TWT, where the AP coordinates shared wake times for multiple STAs simultaneously. Moreover, it does not preclude the possibility of establishing TWT agreements directly between STAs, for instance in an IBSS setting.}}, 
with \glspl{sta} having time-sensitive traffic to transmit to the \gls{ap}. To avoid \mbox{non-deterministic} delays caused by channel contention, the \gls{ap} leverages the \gls{twt} mechanism to multiplex \gls{sta} transmissions in the time domain, aiming to schedule as many traffic flows as possible.  
The scheduling period, defined as the time between consecutive scheduling decisions, is bounded by the beacon frame, periodically transmitted every $T_b$ seconds by the \gls{ap}. Accordingly, each \gls{twt} scheduling decision holds valid till the next beacon transmission. A beacon interval is divided into $D$ slots, each of duration $T_s {=} T_b/D$; a slot represents the temporal granularity for \gls{twt} scheduling. This implies that the awake period for a STA during which it can send/receive data, also called Service Period (SP),  begins at the start of a slot and spans over a discrete number of slots. 
Additionally, we assume the propagation delay to be negligible. 

 A \gls{twt} session between the AP and a STA is composed of one \gls{sp} or many periodic \glspl{sp}, if the session is implicit~\cite{802.11ax}. For efficiency, we focus on implicit sessions, as they do not require new scheduling if the traffic pattern does not change. We remark, however, that such sessions can be modified at every periodic interval if the traffic pattern varies. 
Also, our solution applies to any of the  TWT mechanism configurations foreseen by the Wi-Fi standard.


Within a beacon interval, each \gls{sta} $m$ ($m{=}1,{\ldots},M$) may have one or more time-sensitive packets to transfer, possibly belonging to different traffic flows. 
Based on an experienced signal quality level $l_m$, a \gls{sta} $m$ can determine  the maximum data rate $\rho_m(l_m)$ at which it can successfully transmit towards the AP. In addition to signal level, this rate depends both on the \gls{sta} capabilities and on the network configuration (e.g., channel bandwidth, number of MIMO streams).  

 The \gls{sta} then uses the TWT mechanism to request the scheduling of the data transmission for the computed transmission time, i.e., to request a  
 \textit{Transmission Opportunity} (TXOP) in a TWT \gls{sp}. According to the Wi-Fi\,6 
 specifications~\cite{802.11ax}, the \gls{sta} sends  a \textit{Suggest TWT} message to the AP requesting a desired \gls{sp}, and the desired Minimum TWT wake duration needed to accommodate the TXOP. 
Furthermore, 
we assume that the STA  also indicates the  class of the traffic flow to which data belongs, so that the AP is aware of the corresponding traffic priority level. We will  refer to the data, belonging to a given traffic flow that a STA has to transmit, simply as {\em transmission} (TX).
The generic \mbox{$j$-th}  TX is characterized by: \textit{(i)} the source \gls{sta} id $m$, \textit{(ii)} the  amount of data in bytes, $b_j$, \textit{(iii)} the time $g_j$ at which those data were generated at the application layer by the STA, \textit{(iv)} the hard AoI deadline $d_j$ within which the TX has to be received, and \textit{(v)} the traffic priority level $p_j$. 
It follows that, for each TX $j$, a STA $m$ will request a TXOP of  duration equal to $\tau_j{=}b_j/\rho_m(l_m)$, where $b_j$ denotes the amount in bytes of data belonging to a traffic flow $j$ to be transmitted and $\rho_m(l_m)$ the achievable data rate that the  \gls{sta} $m$ can use. 
For simplicity and without loss of generality, we express $g_j$, $d_j$, and $\tau_j$ in time slots.  
Note that all these elements are expressed in relative time units within the beacon interval.

\begin{figure}
  \centering
  \includegraphics[width=\columnwidth]{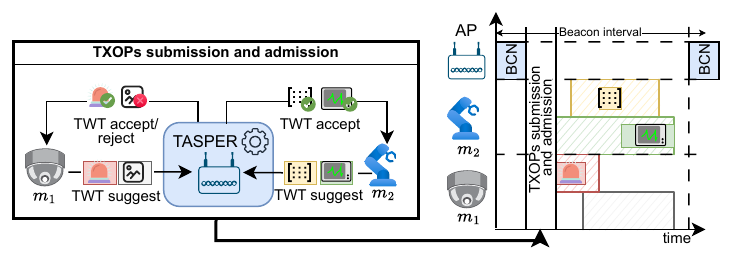}
  \caption{\color{black} System model overview. At each beacon interval, STAs first request traffic scheduling from TASPER, and then follow its scheduling decisions. }
  \label{fig:system_model}
\end{figure}

The AP collects all the TXOP requests at the beginning of the beacon interval and, according to a given strategy, 
it schedules TWT \gls{sp} for the requested TXOP by sending to each requesting STA an  {\em Accept TWT message} {\color{black}(see Fig.~\ref{fig:system_model})}.
Such a message includes some important parameters, namely (i)  the Target Wake Time, i.e., the time instant at which the \gls{sta} should wake up for the TWT session;
(ii) the TWT Wake Interval, i.e., the time interval between subsequent \glspl{sp} for the \gls{sta} (we assume this value to be constant for all \glspl{sta} and equal to the beacon interval);  and (iii) 
the Minimum TWT wake duration, i.e., the minimum duration that the \gls{sta} should stay awake since the  \gls{sp} starts.  

}

Let $\Nc$ be the set of  TXOPs (or, equivalently, TXs) that the AP  is requested to schedule within 
a beacon interval; note that $|\Nc|{\geq} M$. Then the AP will respond to each request in one of the following ways: (i) accept the TWT suggestion as is; (ii) accept the TWT suggestion with a modification; (iii) reject the TWT suggestion. 

\begin{table}[!t]
\centering
\vspace{0.03in}
\caption{Main notation}
\resizebox{0.48\textwidth}{!}{
\begin{tabular}{|c|l|}
\hline
{\bf Symbol} & {\bf Description}
\tabularnewline
\hline
$\Nc$ & Set of schedulable TXs\tabularnewline
\hline
$l_m$ & Quality level of the link between AP and STA $m$ \tabularnewline
\hline
$\rho_m(l_m)$ & Maximum data rate for  STA $m$ \tabularnewline
\hline
$\tau_j$ & No. of time slots needed for TX $j$ \tabularnewline
\hline
D & Number of time slots in a beacon frame \tabularnewline
\hline
$T_b$ & Time interval between consecutive beacon frames \tabularnewline 
\hline
$T_s$ &  Slot duration (time granularity) \tabularnewline
\hline
$g_j$ & Generation time (in slots) for  TX $j$ \tabularnewline
\hline
$d_j$ & AoI deadline (in slots) for TX $j$ \tabularnewline
\hline
$p_j$ & Priority level of TX $j$\tabularnewline
\hline
$E^{\text{tx}}_j$
& Per-slot energy consumption  of the STA performing TX $j$ 
\tabularnewline
\hline
$E^{\text{id}}_j$ & Per-slot energy consumption in idle state of the STA performing TX $j$ 
\tabularnewline
\hline
$E^{\text{st}}_j$& \begin{tabular}{@{}l@{}} Per-slot energy consumption of the STA performing  TX $j$ during mode transitions 
\end{tabular} 
\tabularnewline
\hline
$s_{ij}$ & Auxiliary parameter, equal to 1 if TXs $j$ and $i$ are performed by the same STA; 0 otherwise \tabularnewline
\hline
$x_j$ & TXOP admission binary variable\tabularnewline
\hline
$y_{ij}$ & TXOP sequence binary variable\tabularnewline
\hline
$z_j$ & TXOP end time variable\tabularnewline
\hline
$e_{ij}$ & \begin{tabular}{@{}l@{}}Total energy consumed by the STA performing TX $j$ when $j$ is scheduled immediately after TX $i$ \end{tabular} \tabularnewline
\hline
\end{tabular}
}
\label{tab:notations}
\vspace{-3mm}
\end{table}

As for the per-slot energy consumption of an STA $m$, this is given by
 $E^{\text{tx}}_m {=} P^{\text{tx}}_m {\cdot} T_s $ where  $P^{\text{tx}}_m$ is the transmission power of STA $m$
(assumed to be constant for simplicity).  A TX of  duration $\tau_j$ pertaining to STA $m$  then implies  an energy consumption 
$E^{\text{tx}}_m {\cdot} \tau_j$. Similarly, let $P^{\text{id}}_m$ be the power consumed by \gls{sta} $m$ when idle, 
and   $E^{\text{id}}_m {=} P^{\text{id}}_m {\cdot} T_s$  be the corresponding per-slot energy consumption. 
To save energy, before and after a TX, \gls{sta} $m$ enters the sleep state; we denote with $E^{\text{st}}_m$ 
the energy consumed during the transition from sleep to transmission mode  and vice versa. 
Let $j$ be a TX held by STA $m$. For simplicity, in the following we abuse the notation by replacing  $E_m^{\text{tx}}$, $E_m^{\text{id}}$, and $E_m^{\text{st}}$,
respectively, with $E_j^{\text{tx}}$, $E_j^{\text{id}}$, and $E_j^{\text{st}}$.
Hence, $E_j^{\text{tx}}$, $E_j^{\text{id}}$ and $E_j^{\text{st}}$ represent the per-slot energy consumption incurred, respectively,  during transmission, in idle state, and while transiting between operational states, by the STA performing TX $j$.


{\color{black}\subsection{TASP Formulation}\label{sub:formulation}
Let us now define the TWT Acceptance and Scheduling Problem (TASP), to be solved at the AP. TASP aims to efficiently leverage the Wi-Fi TWT mechanism to meet the stringent demands of IIoT networks—specifically, minimizing energy consumption, ensuring timely data delivery, and handling concurrent traffic flows with diverse priorities and deadlines.}  

Ideally, the AP should schedule all the requested TXOPs, while  keeping  the overall energy consumption as low as possible. 
However,  scheduling all TXs may not be possible, in which case the AP can reject one or more TWT suggestions, e.g., 
those associated with the lowest priority TXs, or the longest ones.  
Choosing  \textit{which} TWT suggestions to reject enables a trade-off between energy consumption and traffic priority level. 
In the following, we formulate the TWT Acceptance and Scheduling Problem (TASP),  
which, depending on the value of a weight  coefficient, $\beta$,  sets the relative importance of two objective components, namely, the cost of rejecting TXs and the STA energy consumption, and aims at minimizing both. 
The decision variables we consider are  as follows:
\begin{itemize}
    \item $\boldsymbol{x} {=} [x_j]$, defined as the \textit{transmission acceptance vector}, where the generic integer element $x_j{\in}\{0,1\}$  indicates whether or not the TXOP associated to TX $j$ is admitted;
    \item $\boldsymbol{Y} {=} [y_{ij}]$,  defined as the \textit{transmission sequence matrix}, whose element $y_{ij}$ takes on 1 when the TXOP associated to TX $j$ is scheduled right after the TX $i$, and 0 otherwise;
    \item $\boldsymbol{z} {=} [z_j]$, defined as the \textit{end time vector} containing the end time of the scheduled TXOPs.
\end{itemize}
Additionally, we introduce 
 two dummy TXs, denoted by $\alpha$ and $\omega$, to allow for a consistent definition of the scheduling sequence of TXOPs, with TX $\alpha$ and $\omega$ being the  first and the last  of the sequence (respectively) and being both associated to null generation time, duration, and traffic priority, and an AoI deadline that is equal to the maximum  among those of all genuine TXs.


We can then write the TASP as:
\begin{usecase}[TWT Acceptance and Scheduling Problem (TASP)]
\begin{mini!}|s|[3]<b>{\boldsymbol{x},\! \boldsymbol{Y},\!\boldsymbol{z}}
{\!\!\sum_{j {\in} \mathcal{\Nc}}\Biggl[ \beta \Bigl((1{-}x_j) p_j \Bigr) {+} (1{-}\beta)  x_j \!\!\!\!\!\!\sum_{i{\in} \mathcal{\Nc}, i \neq j} \!\!\!\! e_{ij} y_{ij}\Biggr]   \label{obj}}{\label{opt}}{}
\addConstraint{\sum_{i,i {\neq} j} y_{ij}}{{=} x_j}{\forall {i} {\in} \{{\Nc} \cup \{\alpha\} \}} \nonumber \addConstraint{}{\label{constr:seq_a}}{\forall j{\in} \{{\Nc} \cup \{\omega\} \}}
\addConstraint{\sum_{i,i {\neq} j} y_{ji}}{{=} x_j}{\forall {i} {\in} \{{\Nc} \cup \{\omega\} \}} \nonumber
\addConstraint{}{\label{constr:seq_b}}{\forall j{\in} \{{\Nc} \cup \{\alpha\} \}}
\addConstraint{z_i {+} \tau_j y_{ij}{+} d_j(y_{ij}{-} 1)}{\leq z_j}{\forall i{\in} \{{\Nc} \cup \{\alpha\} \}}\nonumber
\addConstraint{}{\label{constr:completion}}{\forall j{\in} \{{\Nc} \cup \{\omega\} \}, \, i {\neq} j}
\addConstraint{(g_j+\tau_j)x_j}{{\leq} z_j \label{constr:release}}{{\forall j{\in} \{{\Nc} \cup \{\omega\} \}}}
\addConstraint{z_j}{{\leq} d_j x_j\label{constr:deadline}}{{\forall j{\in} \{{\Nc} \cup \{\alpha\} \cup \{\omega\} \}}}
\addConstraint{e_{ij}}{{\leq} E_j^{\text{st}} {+} \tau_j E_j^{\text{tx}}\label{constr:energy_ub}}{{\forall j{\in} \{{\Nc} \cup \{\alpha\} \cup \{\omega\} \}}}
\addConstraint{e_{ij}}{{\geq} y_{ij}\! \left(\!s_{ij} \cdot \min \bigl\{ E_j^{\text{id}}(\!z_j {-} \tau_j {-} z_i\!),  \right.}{}\nonumber
\addConstraint{}{\left. E_j^{\text{st}}\bigr\} {+} (1 {-} s_{ij}\!) E_j^{\text{st}}\!\right)}{{{\forall i{\in} \{{\Nc} \cup \{\alpha\}\}}}}\nonumber
\addConstraint{}{\label{constr:energy_lb}}{{{\forall j{\in} \{{\Nc} \cup \{\omega\} \}}},\, i {\neq} j}
\addConstraint{z_{\alpha}{=}0, z_{\omega}{=} \max_{j{\in}\Nc}\:\{d_j\}}{{\leq} D\label{constr:dummy_a}}{}
\addConstraint{x_{\alpha} {=} 1, \, x_{\omega} {=} 1}{\label{constr:dummy_b}}{}
\addConstraint{x_j {\in} \{0,1\},\, y_{ij} {\in} \{0,1\}}{}{{{\forall i{\in} \{{\Nc} \cup \{\alpha\}\}}}}\nonumber
\addConstraint{}{\label{constr:binary}}{{{\forall j{\in} \{{\Nc} \cup \{\omega\} \}}},\, i {\neq} j}
\end{mini!}
\end{usecase}

In the above formulation, the cost function (\ref{obj}) is the weighted sum of two elements: (i) the \textit{rejection cost}, defined as the sum of the priority of the TXs associated with the rejected TXOPs, and (ii)  
the energy cost associated with the transmission of the scheduled TXs. 
Specifically, the energy cost associated with  TX $j$  strictly depends on the scheduling order. Let TX $i$ be the TX scheduled immediately before TX $j$, and  $s_{ij}$ be an auxiliary variable that takes on 1 if TX $i$ and TX $j$ are performed by the same STA, and  0 otherwise. Then we define the energy cost associated with TX $j$, $e_{ij}$,  as:
\begin{equation}
\label{eq:e_ij}
e_{ij} {=} \tau_j E_j^{\text{tx}} {+} (1{-}s_{ij})
E_j^{\text{st}} {+} s_{ij} \cdot
\min \left ( E_j^{\text{id}} \cdot ( z_j {-} \tau_j {-} z_i ), E_j^{\text{st}} \right ).
\end{equation}
Indeed, scheduling TX $j$ obviously implies an energy consumption proportional to the duration of TX $j$. Additionally, the STA performing TX $j$ incurs a cost that  depends on the scheduling order: if TX $i$ (i.e., the preceding TX), is performed by a different STA ($s_{ij} {=} 0$), the STA performing TX $j$ will need to transit from sleep to transmit state, thus incurring an energy cost $E^{\text{st}}_j$.
Conversely, if TXs $i$ and $j$ are performed by  the same STA ($s_{ij} {=} 1$), the latter can either  enter the idle state and wait to perform  TX $j$, or  go to sleep and move to transmit state  just before performing TX $j$.
Between these options, the STA will select the one that implies the minimum energy consumption. 
For simplicity and without loss of generality, in the model we neglect the energy consumed by a STA to receive the acknowledgment from the AP (this will instead be accounted for in our performance evaluation).  
%
Finally, note that both the priority and the energy cost of each TX are properly normalized between 0 and 1, in order to be comparable. 

\edits{Constraints (\ref{constr:seq_a}) and (\ref{constr:seq_b}), instead, enforce that the scheduled TXOPs occur sequentially, without overlapping with each other. Constraint (\ref{constr:completion}) ensures that if the TXOP $i$ precedes the TXOP $j$, the latter ends $\tau_j$ slots after the TXOP $i$.  This implies that the end time  of TX $j$ must occur after the end of TX $i$ plus the duration of TX $j$. Crucially, this ensures that no TX can be scheduled before one of its predecessors~\cite{oas}.}
Constraint (\ref{constr:release}) imposes that the end time of a scheduled TXOP does not occur earlier than the sum of the generation time and the duration of the associated TX. \edits{In other words, if a transmission is generated at time $t{=}g_j$~s and requires $\tau_j$~s, the TXOP cannot end before $t{=}g_j {+} \tau_j$~s.} 
The AoI deadline constraint (\ref{constr:deadline}) specifies that the end time of a scheduled TXOP must not exceed the deadline of its associated TX.  
Regarding energy, (\ref{constr:energy_ub}) upper limits  the energy consumption of the STA associated with TX $j$ to the sum of the energy required for transiting from sleep to transmit mode and vice versa, and of the energy consumed while transmitting.  Conversely, (\ref{constr:energy_lb}) provides a lower bound to energy consumption, which depends on whether the considered TXOP is requested by the same \gls{sta} as the preceding TXOP. As already specified before, if the STA is the same, then such a STA remains in an idle state instead of going back to sleep, thereby conserving energy. 
Constraints (\ref{constr:dummy_a}) and (\ref{constr:dummy_b}) define $\boldsymbol{x}$ and $\boldsymbol{z}$ for the dummy TXs. \edits{In particular, the dummy TX $\alpha$ opens the scheduling chain while the dummy TX $\omega$ marks the end of the chain; both have a duration equal to 0\,s and $\omega$ must occur before the end of the beacon interval. It follows that a feasible schedule yields a sequence of TXs that does not exceed the beacon interval.} (\ref{constr:binary}) enforces the elements of $\boldsymbol{x}$ and $\boldsymbol{Y}$ to take a binary value. 
All notations used so far are summarized in Table \ref{tab:notations}.

Importantly, based on~\cite{scheduling_notation}, we can map the TASP into  a single-machine, sequence-independent, completion-dependent batch setup cost scheduling problem with release dates, deadlines, and rejection.  
Indeed, the batch setup cost can represent  the transceiver activation cost for one or  more subsequent TXs, accounting for potential idle periods between TXs (completion-dependent), along with the subsequent deactivation cost.  
Notably, the most  related problems are \textit{(i)} the Order Acceptance and Scheduling (OAS)~\cite{DEWEERDT2021629}, which includes a job setup cost analogous to the activation and deactivation energy in TASP, and \textit{(ii)} the single-machine Job Interval Selection Problem (JISP)~\cite{jisp}, a special case of the OAS that focuses solely on maximizing the number of accepted tasks.  
Both problems can model the AoI constraints for TXs (i.e., maximum completion times for jobs); however, they leave out the energy consumption cost.  
It follows that, to  our knowledge, a formulation equivalent to the TASP has not been previously proposed.  
As far as the TASP complexity is concerned, the following result holds.
\begin{theorem}
The TASP is NP-hard.
\end{theorem}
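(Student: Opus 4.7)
The plan is to establish NP-hardness by polynomial reduction from a well-known NP-hard scheduling problem, exploiting the fact that TASP is expressive enough to encode it as a special case. The natural candidate, already highlighted in the discussion above, is the single-machine Job Interval Selection Problem (JISP), equivalently formulated as $1\,|\,r_j, d_j\,|\,\sum U_j$ (minimizing the number of tardy jobs on a single machine with release dates and hard deadlines), which is strongly NP-hard.

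First I would construct the following polynomial-time restriction of TASP. Given an arbitrary instance of $1\,|\,r_j, d_j\,|\,\sum U_j$ with $n$ jobs characterized by release dates $\tilde r_j$, deadlines $\tilde d_j$, and processing times $\tilde p_j$, I build a TASP instance with $|\Nc|=n$ TXs by setting $g_j=\tilde r_j$, $d_j=\tilde d_j$, $\tau_j=\tilde p_j$, assigning each TX to a distinct STA (so that $s_{ij}=0$ for all $i \neq j$), fixing all priorities to $p_j=1$, choosing $\beta=1$, and setting all energy coefficients $E_j^{\text{tx}}, E_j^{\text{id}}, E_j^{\text{st}}$ to zero. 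Under these choices, constraints (\ref{constr:energy_ub}) and (\ref{constr:energy_lb}) become trivial (all $e_{ij}=0$), the energy component of the objective vanishes, and (\ref{obj}) collapses to $\sum_{j\in\Nc}(1-x_j)$, i.e., the number of rejected TXs. Constraints (\ref{constr:seq_a})--(\ref{constr:deadline}) then enforce precisely that the admitted TXs form a feasible sequential single-resource schedule respecting release dates and deadlines. Hence minimizing (\ref{obj}) on the restricted instance coincides exactly with minimizing the number of tardy jobs in the original scheduling instance.

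Next I would verify that this mapping is computable in time polynomial in $n$ and increases the input size only by a constant factor, so any polynomial-time algorithm for TASP would also yield a polynomial-time algorithm for $1\,|\,r_j, d_j\,|\,\sum U_j$, contradicting its known strong NP-hardness. This suffices to conclude that TASP is NP-hard.

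The main obstacle, in my view, lies not in the reduction itself---which is essentially a restriction---but in carefully verifying that the dummy TXs $\alpha$ and $\omega$, together with the energy constraints (\ref{constr:energy_ub})--(\ref{constr:energy_lb}) and the auxiliary variables $e_{ij}$, do not inadvertently rule out feasible schedules when the energy term is nullified. Once one checks that zeroing the energy parameters trivializes those constraints and that $\alpha$ and $\omega$ merely frame the schedule within $[0,\max_{j\in\Nc}d_j]\leq D$ as dictated by (\ref{constr:dummy_a})--(\ref{constr:dummy_b}) without restricting the admitted genuine TXs, the correspondence with the underlying NP-hard problem becomes immediate, and the reduction carries through.
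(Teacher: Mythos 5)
Your proposal is correct, but it is not the route the paper takes for its formal proof. The paper's proof is a one-liner: it observes that TASP has integer and continuous variables plus the quadratic constraint (\ref{constr:energy_lb}), classifies it as a MIQCP, and cites the NP-hardness of MIQCP. Strictly speaking, that is a membership argument (TASP \emph{belongs to} an NP-hard class), which by itself does not establish hardness of the specific problem; the paper only gestures at the stronger argument in the remark immediately after the theorem, noting that dropping (\ref{constr:energy_lb}) and the energy term leaves a problem equivalent to the single-machine JISP, which is NP-hard. Your proof is essentially a worked-out version of that remark: a restriction of TASP (one TX per STA, unit priorities, nullified energy contribution, $\beta$ chosen so the objective reduces to counting rejections) that encodes $1\,|\,r_j, d_j\,|\,\sum U_j$, whose feasibility kernel (sequencing within intervals) is strongly NP-complete. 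This buys a self-contained lower-bound argument that does not depend on the MIQCP classification and even shows hardness persists without the quadratic energy constraint, whereas the paper's argument buys brevity at the cost of rigor. Two minor points worth tightening: you do not need to set the energy coefficients to zero, since the variables $e_{ij}$ only enter the objective through the factor $(1-\beta)$ and constraints (\ref{constr:energy_ub})--(\ref{constr:energy_lb}) are always mutually satisfiable, so they never cut feasible $(\boldsymbol{x},\boldsymbol{Y},\boldsymbol{z})$; and you should state explicitly that $D$ (and $T_b$) in the constructed instance is chosen at least $\max_j \tilde d_j$ so that constraint (\ref{constr:dummy_a}) does not truncate the scheduling horizon.
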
 
\begin{proof}
    The TASP includes both integer and continuous variables, as well as one quadratic constraint (\ref{constr:energy_lb}), and therefore it is a Mixed Integer Quadratic Constrained Programming  (MIQCP) problem, which is known    to be NP-hard~\cite{burer2011milp}. 
    \end{proof}
    It is worth noting that, even when ignoring (\ref{constr:energy_lb}) and the energy term in (\ref{obj}), the simplified problem equivalent to the single-machine Job Interval Selection Problem (JISP) is also  NP-hard~\cite{jisp}.  
While MIQCP solvers such as Gurobi and IBM CPLEX can compute the optimal solution for TASP, 
real-time scheduling decisions must be made on a per-beacon period basis. 
This requirement, combined with the limited computational resources of low-power \glspl{ap}, 
makes solving non-trivial instances of TASP impractical.

\section{The TASPER Algorithm}\label{sec:TASPER_algo}

The NP-hard nature of the TASP problem clearly calls for the design of an efficient heuristic solution that can swiftly solve TASP even in the presence of a large number of transmission requests.   
We thus propose the TASP Efficient Resolver (TASPER), whose design was inspired by the BALAS algorithm,  introduced in~\cite{DEWEERDT2021629}  to solve the OAS problem (see Sec.\,\ref{sec:system_model}).
 The key intuition behind TASPER is to treat the TWT-based TXOP scheduling as a best-path search within a decision graph. Also, TASPER efficiently schedules traffic data transmissions and it differentiates from previous work (including BALAS) as it effectively embeds energy saving in the scheduling problem.

Specifically, TASPER solves the TX scheduling problem at every beacon interval by implementing the following multi-step strategy. The {\em first step}  consists in building a directed decision graph, which, as shown in Fig.\,\ref{fig:twt_graph}, is composed of:

\noindent $\bullet$  A set of vertices, including:  (i) a virtual vertex, $\alpha$, representing the starting point of the sequence of TXs, (ii) a vertex for each TX to be scheduled and whose AoI deadline has not yet expired, and (iii) a virtual vertex $\omega$, representing the end of the TXs sequence. 
For simplicity, given a transmission request TX $j$, we abuse the notation and use it also to denote the corresponding vertex;  

\noindent $\bullet$  A set of edges, each of which connects two possible back-to-back TXs. 
Notice that the edges exiting vertex $\alpha$ and entering vertex $\omega$ are as many as the number of possible TXs to be scheduled. Also, each edge exiting vertex $i$ and entering vertex $j$,  is associated with a bi-dimensional weight defined 
as $\wb_{ij}{=}[\tau_{j},v_{ij}]$ where, as mentioned earlier, $\tau_{j}$  denotes the duration of TX $j$ and $v_{ij}$ is defined according to the TASP objective function, 
i.e., $v_{ij} {=} \beta p_j {+} (1{-}\beta) (1{-}e_{ij})$. (We recall that the priority of TX $j$, $p_j$, 
and the energy cost, $e_{ij}$,  are normalized within the range $[0,1]$).


\begin{figure}
  \centering
\includegraphics[width=0.90\columnwidth]{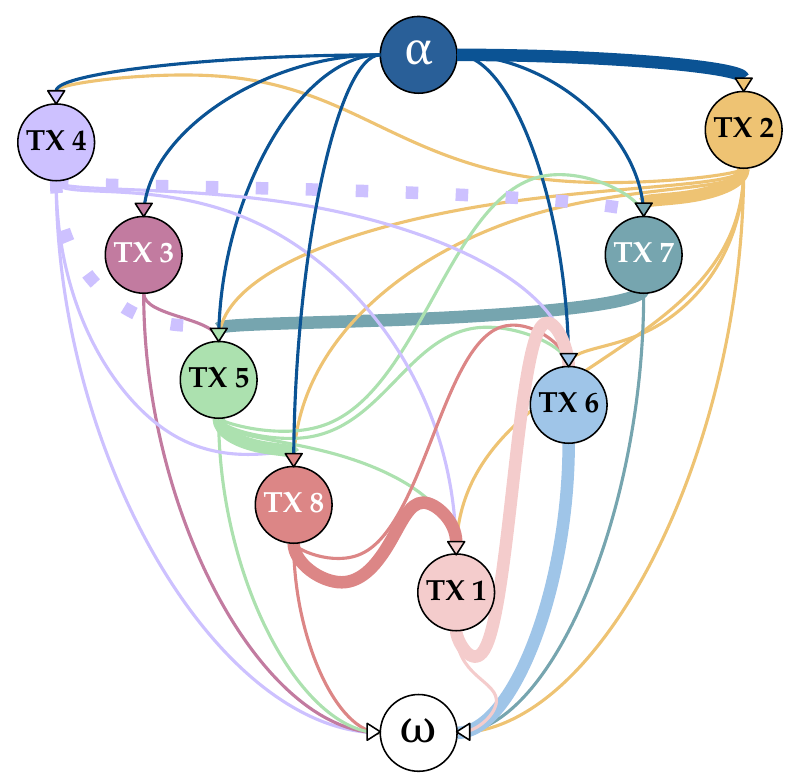}
  \caption{Example of the TASPER  graph, referring to the scheduling problem in Fig.\,\ref{fig:naivevsoptimum}.  
  For clarity, the outbound edges  have the same color as the vertex they exit while their arrow has the same color as the inbound vertex. Notice 
  how the graph is not fully connected:  e.g., there is no edge from TX\,3 to TX\,4, since, if TX\,3 is scheduled, its end time would exceed the latest possible start time of TX\,4.
  The  solid thick lines, traversing vertices $\alpha$, $2$, $7$, $5$, $8$, $1$, $6$, and $\omega$, 
  mark the best path, i.e., the one identified by the Optimum solution 
  in Fig. \ref{fig:naivevsoptimum}. 
  Finally, the dotted thick lines identify the neighborhood of Tx $4$ with a neighborhood size $\eta{=}1$, including TXs 5 and 7.}
  \label{fig:twt_graph}
\end{figure}

Given the above graph,  visiting a vertex means scheduling the TXOP needed to fulfill the corresponding transmission request. 
It follows that scheduling a sequence of TXs can be represented as a path from $\alpha$ to $\omega$.  
Accordingly, as TASPER schedules TXOPs fulfilling the TX requests, i.e., it builds a path traversing the graph, 
the {\em second step} of the algorithm consists in identifying the feasible scheduling solutions among all the possible ones. 
To this end, TASPER keeps track of the residual scheduling time available within the given  beacon interval. 
Let $t_0$ be the variable tracking the elapsed time. 
Let TX $i$ be the last visited vertex, and TX $j$ the potential next vertex to visit. The latter can be visited (i.e., it is a feasible choice) only if the maximum between $t_0$ and $g_j$, plus $\tau_{j}$, does not exceed the AoI deadline $d_j$ of TX $j$: $\max{\bigl\{t_0, g_j \bigr\} + \tau_{j}} \leq d_j$.   
We thus define a {\em feasible path} from $\alpha$ to $\omega$ as a sequence of edges such that, at any vertex $j$ composing 
the path, the intermediate sum of the TXOP  duration  allocated for the upstream vertices  honors the delay constraint of the TX $j$. 
 
 Finally, to enable TASPER to select the best scheduling solution, we define the \textit{value} of a path as the sum 
 of the values of the traversed edges, which is consistent with the expression of TASP's objective function. 
Notice that, given vertex TX $j$, the value taken by the weight of the edge going from TX $i$ to TX $j$ 
depends on whether the STA requesting TX $i$ is the same that is requesting TX $j$. As pointed out 
in Sec.\,\ref{sec:system_model}, the energy consumption $e_{ij}$ of a TX $j$ is indeed influenced by the specific scheduling order adopted.  
TASPER's {\em third step} then consists in  finding the feasible path 
from $\alpha$ to $\omega$  that maximizes the path value defined above.  
To reduce the time complexity of the algorithm, we let TASPER achieve this goal by considering a limited  number of candidate paths. 
To this end,  TASPER exploits the concept of \textit{neighborhood} of a TX and uses the  neighborhood size, 
denoted  by $\eta$, as a tunable parameter to trade off the algorithm optimality gap with its time complexity.  
Let us consider the list of not-scheduled-yet and not-expired TXOP requests, sorted  in ascending order according 
to their latest start time (the latest start time of a TX $j$ is defined as $d_j {-} \tau_j$). 
Then TX $j$ is within the neighborhood of TX $i$ if their respective indexes in the list differ at most by $\eta$. 
An example of a TX's neighborhood is given in Fig.\,\ref{fig:twt_graph}. 
In such a way, indicating with TX $i$  the last scheduled transmission, the search for the next TX to schedule is 
restricted to the closest TX to TX $i$ in the above-mentioned ordered list.

In summary, TASPER's key idea  is that,  at each vertex $i$, it selects as next TX to visit the \textit{dominant} 
TX choice in the neighborhood $\mathcal{N}_i$, i.e., the one that \textit{(i)} is associated with  the highest value and that, 
in case of a tie, \textit{(ii)} yields  the shortest completion time. By doing so, it is possible to select the TXs that have 
both high priority and low completion time, thus leaving more room to schedule additional TXs. 
This provides a solution that maximizes the terms appearing in the TASP objective function as well as the number of scheduled TXs.

\begin{algorithm}[t]
\small
\caption{TASPER algorithm}
\label{alg:TASPER}
\begin{flushleft}
 \hspace*{\algorithmicindent} \textbf{Input} \\  \hspace*{\algorithmicindent} -List of transmissions $L$ (TX $\alpha$ excluded) \\
 \hspace*{\algorithmicindent} -Neighborhood size $\eta$ \\
 \hspace*{\algorithmicindent} \textbf{Output} \\
 \hspace*{\algorithmicindent} -TXOP scheduling $P_{\textbf{best}}$ \\ 
\begin{algorithmic}[1]
\State Order List $L$ by TX latest start time \label{TASPER:order_list}
\State Assign an index $ind_j$ to each transmission in List $L$, equal to the position of TX $j$ in List $L$  \label{TASPER:assign_list_index}
\State Create an empty path List $Pt$ \label{TASPER:initialize_path_list}
\For{TX $j$ in L} \label{TASPER:initialize_paths_start}
    \State  Create an empty Path object $P$  
    \State  Create an empty list $P$.visited\_TX
    \State $P$.visited\_TX.append($\alpha$)    
    \State $P$.visited\_TX.append($j$) 
    \State $P$.cumulative\_reward $\leftarrow$ $P$.cumulative\_reward + $v_{\alpha,j}$
    \State $P$.time $\leftarrow$ $g_j + \tau_j$
    \State FINDPATH(Job List $L$, Path $P'$, Job $j$, Path List $Pt$) \label{TASPER:initialize_paths_end}
\EndFor
\State $P_{\textbf{best}}$ $\rightarrow$ $\max{\{Pt, key = \text{cumulative\_reward}\}}$ \label{TASPER:select_best_path} \\ 
\Return $P_{\textbf{best}}$
\State
\Procedure{FindPath}{Job List $L$, Path $P$, Job $j$, Path List $Pt$} \label{TASPER:procedure_find_path}
   \State $N_j$ $\leftarrow$ FindNeighbors(Job List $L$, $ind_j$, $\eta$) \label{TASPER:find_neighbors}
   \For{Tx $n$ $\in$ $N_j$}    \label{TASPER:check_visited_start}
       \If{$n \in P$.visited\_TX}
       \State \textbf{continue}
       \EndIf \label{TASPER:check_visited_end}
    \State $P'$ $\leftarrow$ $P$ \label{TASPER:create_new_path_start}
    \State $P'$.visited\_TX.append($n$) 
    \State $P'$.cumulative\_reward $\leftarrow$ $P'$.cumulative\_reward + $v_{jn}$
    \State $P'$.time $\leftarrow$ $\max{\{g_n,P'\text{.time}\}}$ + $\tau_n$ \label{TASPER:create_new_path_end}
    \If{isDominated($P'$, $n$.paths)} \label{TASPER:check_dominance_start}
    \State \textbf{continue}
    \Else \label{TASPER:check_dominance_end}
    \For {Path $P''$ $\in$ $n$.paths}  \label{TASPER:remove_dominated_start}
    \If{isDominated($P''$, $P'$)}
    \State $n$.paths.remove($P''$)
    \State $Pt$.remove($P''$)
    \EndIf
    \EndFor \label{TASPER:remove_dominated_end}
    \State $n$.paths.remove($P$) \label{TASPER:replace_P_start}
    \State $n$.paths.append($P'$) 
    \State $Pt$.remove($P$) 
    \State $Pt$.append($P'$) \label{TASPER:replace_P_end}
    \State FINDPATH(Job List $L$, Path $P'$, Job $n$, Path List $Pt$) \label{TASPER:recursive_call}
    \State \textbf{return}
    \EndIf
    
   \EndFor
\EndProcedure
\end{algorithmic}
\end{flushleft}
\end{algorithm}

We show the pseudocode of  TASPER   in Algorithm\,\ref{alg:TASPER}. First, the  requested TXOPs to be scheduled are 
sorted according to their latest start time, $d_j {-} \tau_j$, forming an ordered list of TXs; hence, 
each TX $j$ is assigned an index $ind_j$ corresponding to its position in this list (Lines \ref{TASPER:order_list}--2). 
Then TASPER creates an empty Paths list to store the various paths from $\alpha$ to $\omega$  and initializes a 
Path for each available TX $j$ (Line \ref{TASPER:initialize_path_list} and Lines \ref{TASPER:initialize_paths_start}--\ref{TASPER:initialize_paths_end}). Notice that each path is initialized with a cumulative value equal to the value associated with TX $j$, calculated as discussed earlier in this section. Finally, the path time is set to $g_j + \tau_j$, i.e., the generation time of the TX plus the TX duration. For each path, TASPER then applies the FINDPATH procedure, as defined in Line \ref{TASPER:procedure_find_path}. This procedure plays a key role. First (Line \ref{TASPER:find_neighbors}), it finds the neighbors of TX $j$ through another auxiliary procedure, FINDNEIGHBORS, which simply applies the neighborhood mechanism previously described. As remarked above, both already scheduled TXs and any TX whose AoI deadline has expired are obviously excluded from the neighborhood. Then, for each TX $n$ identified as a neighbor of TX $j$, the procedure checks whether that TX has already been visited in Path $P$: if so, it moves to the next neighbor TX (Lines \ref{TASPER:check_visited_start}--\ref{TASPER:check_visited_end}); otherwise, the procedure includes TX $n$ in Path $P$  and calculates the new path $P^{'}$'s value  and time component (Lines \ref{TASPER:create_new_path_start}--\ref{TASPER:create_new_path_end}). Subsequently, the procedure verifies whether $P^{'}$ is dominated by another, already discovered path that includes the same TX $n$ (Lines \ref{TASPER:check_dominance_start}--\ref{TASPER:check_dominance_end}). If so, $P^{'}$ is discarded and the procedure jumps to the next neighbor TX; otherwise, any path dominated by $P^{'}$ is removed from the Path list (Lines \ref{TASPER:remove_dominated_start}--\ref{TASPER:remove_dominated_end}). In such a case, the old path $P$ is replaced by $P^{'}$ in both the lists  including TX $n$, and the overall Path list (Lines \ref{TASPER:replace_P_start}--\ref{TASPER:replace_P_end}). Finally, the procedure is called recursively until the path reaches $\omega$ (Line \ref{TASPER:recursive_call}).
After the procedure has identified the feasible paths, the path with the largest cumulative value is selected 
(Line \ref{TASPER:select_best_path}). 

{\bf TASPER Complexity.} 
The time complexity of TASPER can be computed following a similar procedure to~\cite{DEWEERDT2021629}. 
It is given by $O(n \cdot \eta^3 \cdot \sigma \cdot 4^{\eta})$, where  $n$ is the total number of TXs to be scheduled within a given beacon interval, while $\sigma$ 
is  the so-called \textit{slack}, i.e., the maximum number of TXs ready for scheduling across the different time instants in the beacon interval. 
As an example, in Fig.\,\ref{fig:naivevsoptimum}, we have $\sigma{=}3$: indeed,  after about 4\,ms, and, again, at time instant 15\,ms, 3 TXs are available for scheduling (namely, TX 2, TX 3, and TX 4 at 4\,ms, and   TX 4, TX 5, and TX 7 at 15\,ms). 
\edits{Also, to better emphasize the scalability of TASPER with the number of STAs $N$,  consider a given beacon interval and traffic pattern such that each STA generates $P$ packets per beacon interval. In this case, the total number of transmissions to be scheduled  is given by  the number of STAs $N$ multiplied by the number of packets $P$.  By fixing the algorithm parameter $\eta$, the resulting complexity becomes $O(N \cdot P \cdot \sigma)$, i.e., 
 it increases linearly with the number of STAs.
For instance, as the number of STAs grows from 32 to 128,  the algorithm complexity  increases by a factor of four.}

\section{Simulation Framework and Evaluation Methodology}
\label{sec:methodology}

This section first describes ns-3-twt -- the TWT ns-3-based simulation framework we developed and used for our 
performance evaluation. Then it introduces our evaluation methodology and simulation settings, 
as well as the benchmark schemes against which we compare the performance of TASPER.  

\subsection{The ns-3-twt simulation framework} 
\label{sub:sim_ns-3-twt} 
Despite the great interest received by TSN, there exist just a few open frameworks that allow for a reliable simulation of 
such networks, including a realistic model of IEEE 802.11ax with the TWT mechanism. Among these, to our best knowledge, the most complete existing open-source 
solution 
was the one developed by Venkateswaran et al.~\cite{ns-3-twt-shyam-2024-accepted-paper} based on the well-established ns-3 framework~\cite{henderson2008ns-3, ms-van3t-journal}, while other network simulators such as MATLAB or OMNeT++ have not been extended yet to include TWT functionalities. 
We therefore adopted the simulator in \cite{ns-3-twt-shyam-2024-accepted-paper} and enhanced their solution by realizing an advanced simulation framework for TWT and TSN scenarios.

Our simulator, ns-3-twt, is released under an open-source license\footnote{The link will be made available upon acceptance of the paper.}. Building on the already existing implementation of a subset of TWT agreement messages (specifically, the implicit and individual agreements), ns-3-twt introduces a range of enhanced features, including:
{\em (i)} a flexible, complete TWT configuration (e.g., traffic arrival time, TWT Service Period start and duration), {\em (ii)} 
the possibility to set AoI deadlines for the traffic and to log whether they are met or not, {\em (iii)} 
a customized IEEE 802.11ax simulator to easily set up simulation with IEEE 802.11ax and TWT, {\em (iv)} the 
support for multiple classes of STAs, each with a different energy model, {\em (v)} 
the seamless logging of the value and distribution of several performance 
metrics, {\em (vi)} the possibility to set up multiple TWT networks, connecting  the different APs through configurable wired links.

We also highlight that a relevant feature of our simulator is the computation of the energy consumed by the STAs, leveraging an 
accurate energy model that differentiates the energy consumed in the different PHY-layer states (i.e., transmission, reception, 
Clear Channel Assessment, sleep, idle). The energy model in~\cite{ns-3-twt-shyam-2024-accepted-paper} has been extended to support different classes of STAs in the same simulation. Each class has its own energy figure for each of the PHY layer states mentioned above.
The simulator can report the total energy consumed in each of the PHY layer states for each STA, and several other aggregated energy metrics. 

\begin{figure*}[!ht]
  \centering
\subfigure[ns-3-twt TWT scheduling parameters]{
\includegraphics[width=0.42\textwidth]{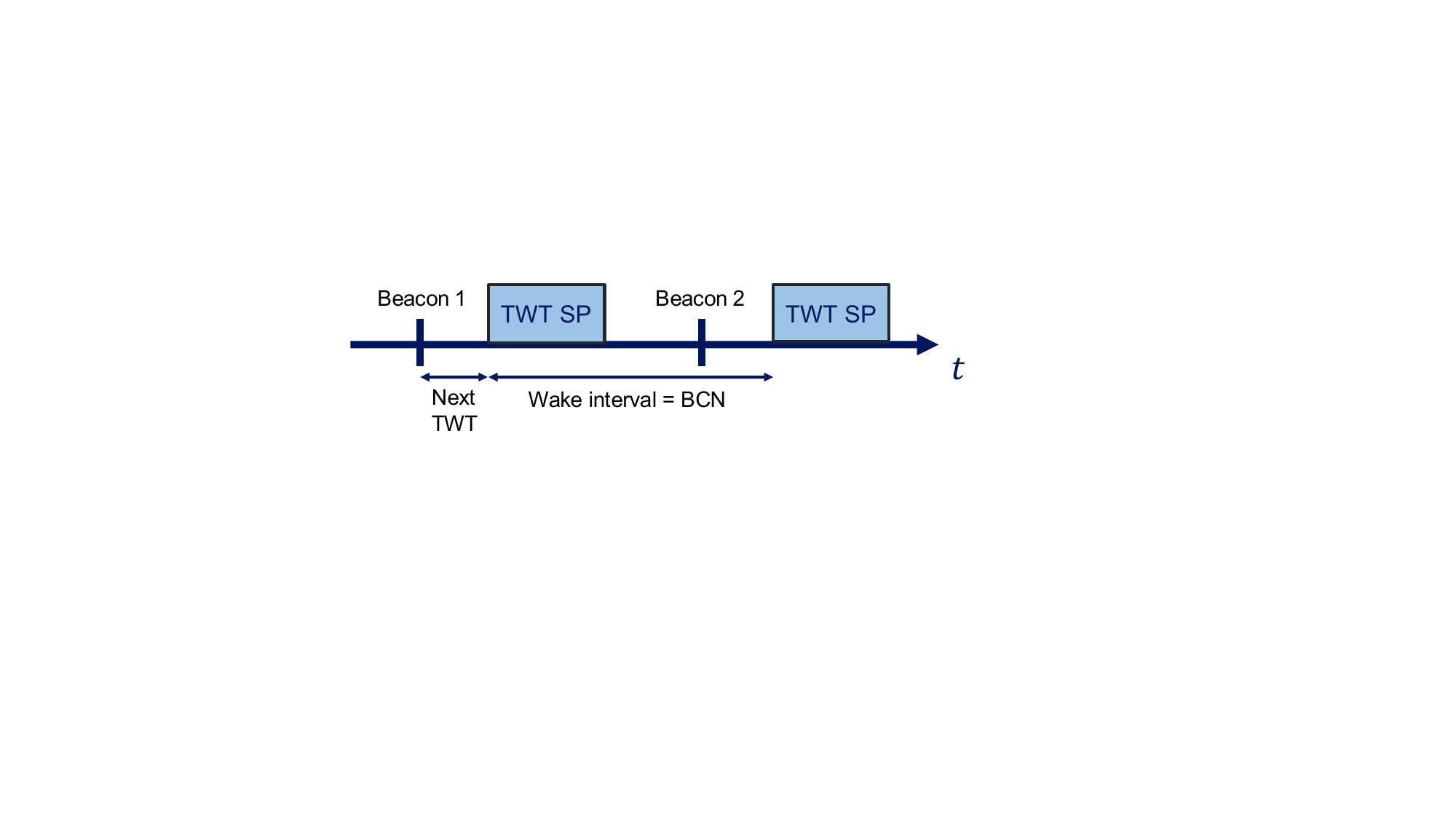}
}
\subfigure[Illustrative example of a simple TWT application with 3 STAs]{\includegraphics[width=0.5\textwidth]{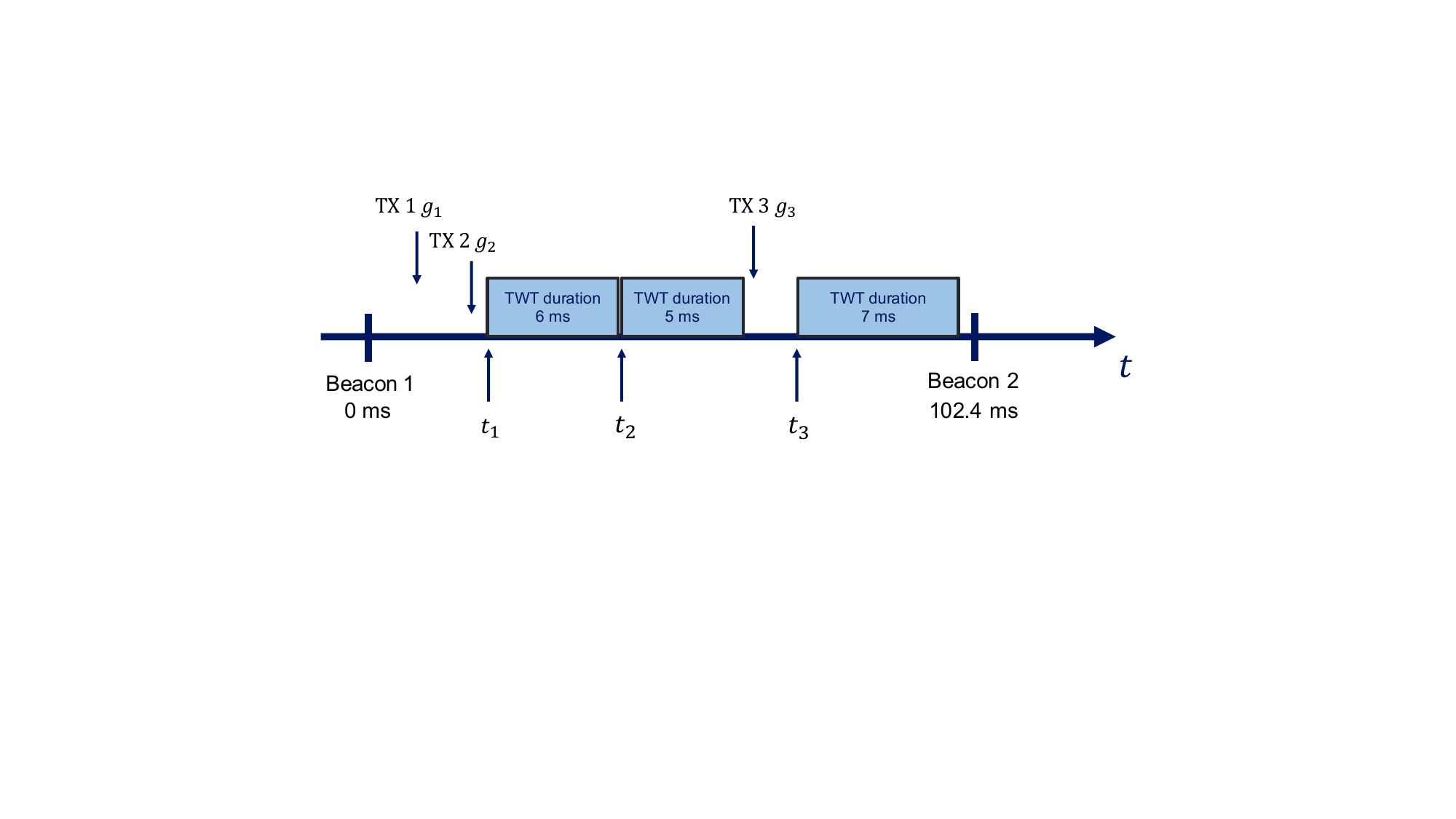}
}
   \caption{ns-3-twt TWT implementation with the main scheduling parameters and an illustrative example of an application with three STAs that have three different transmissions (TX) with different generation times ($g_j$), TWT SP durations and TWT SP start times ($t_j$). The latter are defined through the Next TWT parameter shown on the left. Notice how the TWT scheduling is periodic unless specified otherwise, and the periodicity, also named \emph{wake interval}, corresponds to the beacon interval (BCN).}
     \label{fig:twt}
\end{figure*}

In ns-3-twt, the AP unilaterally dispatches a TWT Accept frame to all STAs based on a specified TWT schedule, to schedule their awake and sleep times.
Within these frames, a crucial part is the \textit{next TWT} field, indicating the time after the start of the next beacon 
interval when the STA can exit the sleep mode and access the physical channel within its TWT \gls{sp}. The window lasts for a 
specific duration denoted by the \textit{TWT duration}. Importantly, in our implementation, the TWT \glspl{sp} are periodic. 
As exemplified in Fig.~\ref{fig:twt}(a), this means that, unless a new schedule is provided by the AP, the STAs will periodically use the same TWT \gls{sp} starting time in each beacon interval. 
Fig.~\ref{fig:twt}(b) gives an illustrative example of a potential TWT application on ns-3-twt. 
The sample scenario includes an AP and three STAs, each of which has a burst of traffic to transmit; the figure  highlights the traffic 
generation times ($g_{j}$), the TWT \gls{sp} start times ($t_{j} {=} z_{j}-\tau_{j}$), and the respective durations for 
each STA within a single beacon interval.

\revision{Note that, by default, ns-3-twt considers a scenario with no other interfering networks. Thus, when scheduling only one STA at a time for each SP, no packet losses or retransmissions are expected to occur, unless the STA and the AP are  too far away from each other, or TWT is not enabled. Instead,  when multiple STAs are scheduled in the same SP, they may   contend for the channel and their transmissions may collide. Investigating this  scenario, however, is out of scope of our analysis, as we focus on  individual TWT sessions. 
It is also important to distinguish between  losses due to channel propagation conditions or channel contention, and  those due to  traffic that could not be scheduled by its target deadline. We remark that the latter type of losses is  taken into account in the simulations.}

\subsection{Evaluation methodology}\label{sub:method}

\begin{figure*}[hb!]
  \centering
\subfigure[]{
\includegraphics[width=0.3\textwidth]{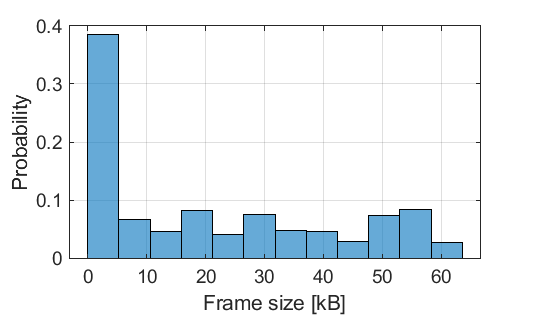}
}\subfigure[]{
\includegraphics[width=0.3\textwidth]{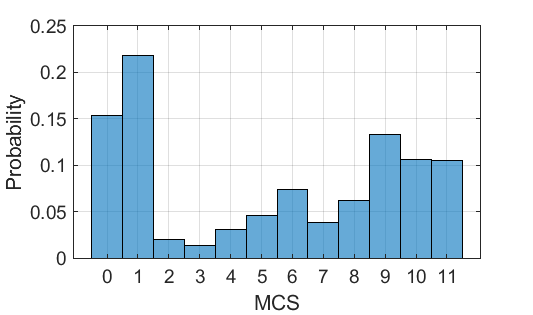}
}\subfigure[]{
\includegraphics[width=0.3\textwidth]{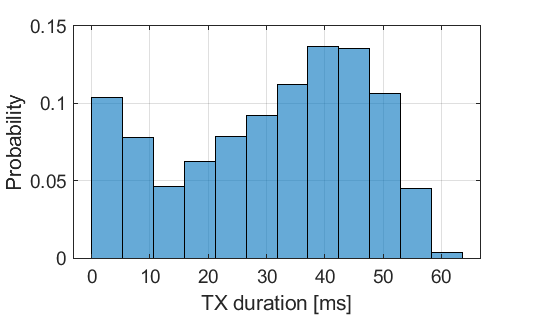}

    }
   \caption{Distribution of: the frame size (a), the best value of MCS that can be used for TXs (b), and the TXOP duration (c), 
   for the generated TASP instances.}
      \label{fig:dist_mcssize}
\end{figure*}

To assess the performance of TASPER, we follow three main steps: {\em (i)}  we first focus on generating some non-trivial 
instances of the TASP; {\em (ii)}  we then use TASPER, and the selected benchmarks, to derive the solutions to the instances generated in the first step; {\em (iii)}  we evaluate the actual performance of such solutions using the ns-3 network simulator. 

To generate non-trivial instances of the TASP, we started by considering three possible scenarios, with 16, 32, and 64 \glspl{sta}, 
respectively. 
In all experiments, each \gls{sta} $j$ requests a TXOP associated with a single TX $j$.  
The TX parameters follow the criteria applied in~\cite{DEWEERDT2021629}, with parameters $\tau {=} 0.2$, and $R{=} 0.3$.
However, since the approach in~\cite{DEWEERDT2021629} accounts only for the transmission durations, in our case 
it is necessary to map the TXOP durations into corresponding values of \gls{mcs} and TX frame size. To  this end, 
the frame size is selected by sampling an empirical distribution shown in Fig.~\ref{fig:dist_mcssize}(a). 
Given the frame size, the MCS is determined as the one providing the frame transmission duration that best approximates 
the requested TXOP. For the generated problem instances, Fig.~\ref{fig:dist_mcssize}(b) and Fig.~\ref{fig:dist_mcssize}(c), 
respectively, show the resulting distributions of STA MCSs and of the TXOP durations. 
Then, we fixed the number of STAs  and evaluated all schemes under study over 100 consecutive beacon intervals. 
By doing so, the TXs not scheduled during a beacon interval can be scheduled in the following beacon intervals, 
provided that their AoI deadline has not expired. 
As for running TASPER on the generated problem instances, we set ${\eta} {=} {9}$, as we found this value to offer a 
good trade-off between solving time and optimality gap. 

\revision{Finally,  as mentioned, in our simulations there are no other ``interfering'' networks. The real-world experimental evaluation (Section~\ref{sec:exp}) instead considers a real environment including other interfering networks operating at 2.4\,GHz.}

\subsection{Benchmarks}\label{sub:benchmar}
We compare TASPER  against the following  strategies:

    \textbullet\, {\em Optimum:} it is the solution calculated by Gurobi, a well-known numerical optimization solver;
    
    \textbullet\, {\em ShortestFirst (SF):} it prioritizes the shortest TXOP requests. At first, SF initializes the auxiliary variable $t_0 {=} 0$ and considers the subset of all requested TXOPs such that
    duration $\tau_j$ meets the condition: $t_0{+} \tau_j  {\leq}  d_j$ and $t_0 {+} \tau_j {\leq} T_b$. This implies that the TXOP end time exceeds neither the TX AoI deadline $d_j$ nor the end of the beacon interval $T_b$. Among these TXOPs, it schedules the one with the shortest requested duration.  In case of ties, it selects one TXOP according to the following decision criteria (in descending  order of importance): (1) earliest TX AoI deadline, (2) highest TX traffic priority, and (3) oldest TX generation time. If  any tie still exists, it selects a TXOP randomly. 
 
  \textbullet\, {\em FIFO:} it schedules the requested TXOPs according to their  generation time. 
    If, upon serving a TXOP request, the associated AoI deadline has expired, the TXOP request is dropped.  
   Ties are solved based upon  (1) the shortest  TXOP duration, (2) the highest traffic priority; otherwise, the TXOP request is selected at random. 
   
    \textbullet\, {\em PriorityFirst (PF):} it works similarly to SF, but it privileges the TXOPs associated  with the highest traffic priority. In case of ties, it schedules TXOP requests (in descending order of importance): (1) nearest TX AoI deadline, (2) shortest TX duration, (3) oldest TX generation time,  else (4) at random; 
    
     \textbullet\, {\em Random:} it randomly schedules the requested TXOPs with  duration $\tau_j$  such that $t_0 {+}\tau_j   {\leq}  d_j; t_0  {+} \tau_j {\leq} T_b$.  

\revision{ \textbullet\, {\em WirelessHART-based (HSA):}  
an adaptation of the DC-HSA algorithm from \cite{chen2018joint}, originally designed for WirelessHART \cite{iec62591_2016_wirelesshart} industrial 802.15.4 networks.  The original DC-HSA algorithm aims to minimize the sum of end-to-end delivery times of the scheduled traffic flows weighted by their priority level, while ensuring that each flow meets its deadline. More specifically, DC-HSA works by iteratively solving Maximum Weighted Independent Set (MWIS) scheduling problems. For each scheduled flow, DC-HSA assigns:  i) a multi-hop link towards the network sink; ii) a frequency channel; iii) a transmission slot. Our implementation (HSA) preserves the core greedy MWIS scheduling logic, with three key adaptations to comply with the Wi-Fi uplink model: 
(i)  the network topology is a single‐hop star;
(ii) the entire frequency channel is allocated to the STA to which  access is granted; (iii) allocated time intervals are 1.024\,ms long (102.4 ms beacon interval ÷ 100) instead of  10\,ms as in WirelessHART.}

It is also worth mentioning that, to get statistically meaningful performance results,   we ran the Random strategy 100 times on each of the 100 instances. 
We then apply the solution to the problem instances that we obtain under the different solution schemes 
in our ns-3-twt simulator  and derive an extensive set of results, as 
shown below.

\section{Numerical Results}\label{sec:peva}
We consider an industrial sensor network scenario, where 16 to 64 STAs (STAs) are connected to an AP on a 2.4\,GHz 
frequency band, with SISO 20~MHz channel setup for both the STAs and the AP. The setup also employs OFDMA with a 800-ns 
guard interval, and customizable uplink traffic.
In our scenario, the uplink traffic corresponds to packets of different lengths, depending on the problem instance. 
More specifically, the traffic that needs to be transmitted by each STA ranges from 1,600\,B to 64,500\,B, and it 
is fragmented into packets of 2,304\,B when the total traffic that needs to be transmitted exceeds such a length 
(corresponding to the IEEE 802.11 Maximum Transmission Unit).

We then focus on one beacon interval of 102.4\,ms, and consider that each STA may belong to one of four possible classes 
of STAs, each class characterized by different energy parameters, as presented in Table~\ref{tab:energySTA}.  

As mentioned, we tested TASPER and the benchmark strategies on 100 different problem instances (with Random being run 100 times on each instance to guarantee the statistical validity of the experiments). The final results have then been obtained by averaging the output of each problem instance (in terms of mean rejection cost, total energy, and other metrics), and by computing the 95\% confidence intervals, shown in the figures as black vertical bars.

\begin{figure*}[!hb]
  \centering
\subfigure[16 STAs]{
\includegraphics[width=0.31\textwidth]{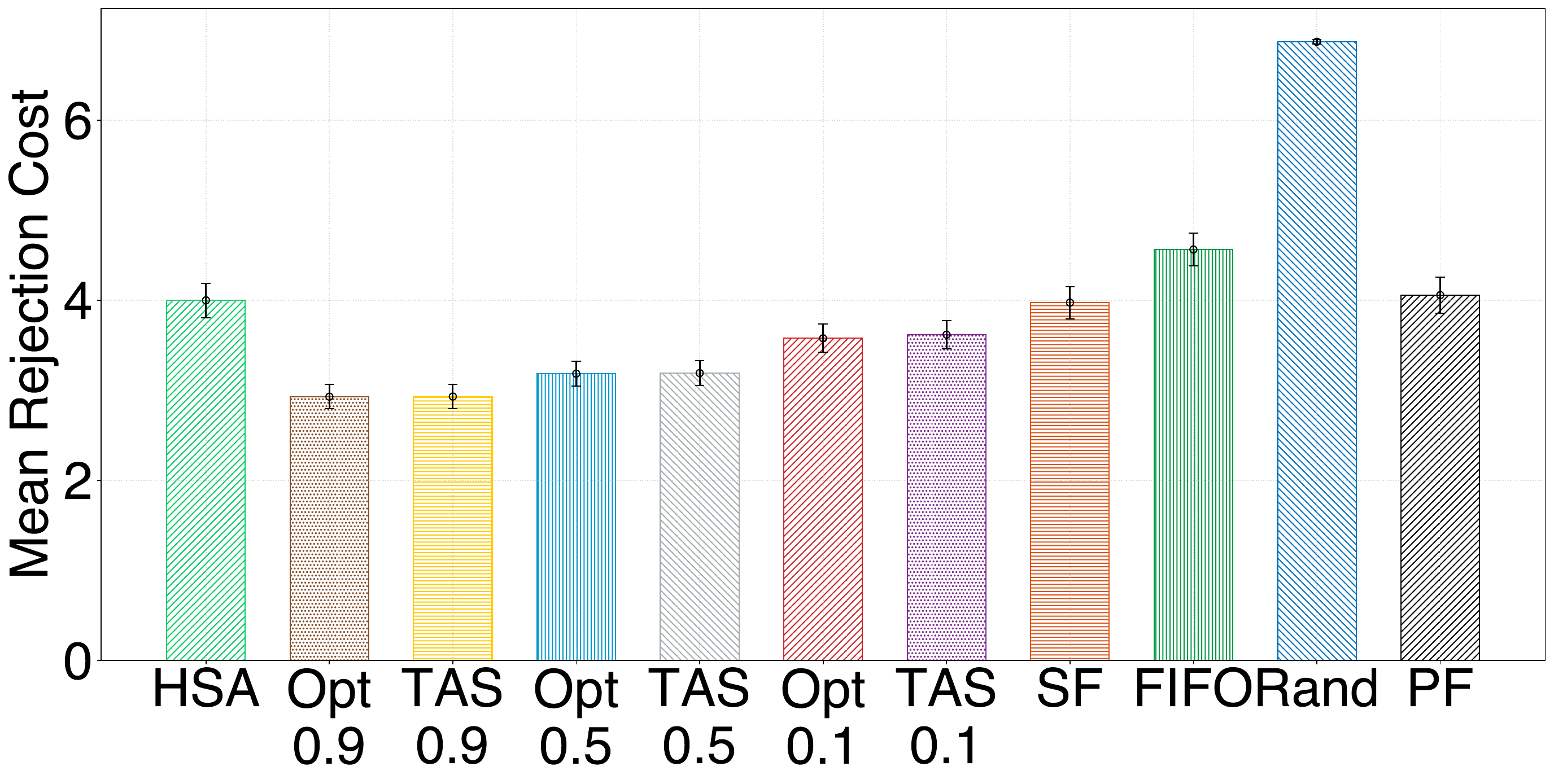}}
\subfigure[32 STAs]{
\includegraphics[width=0.3\textwidth]{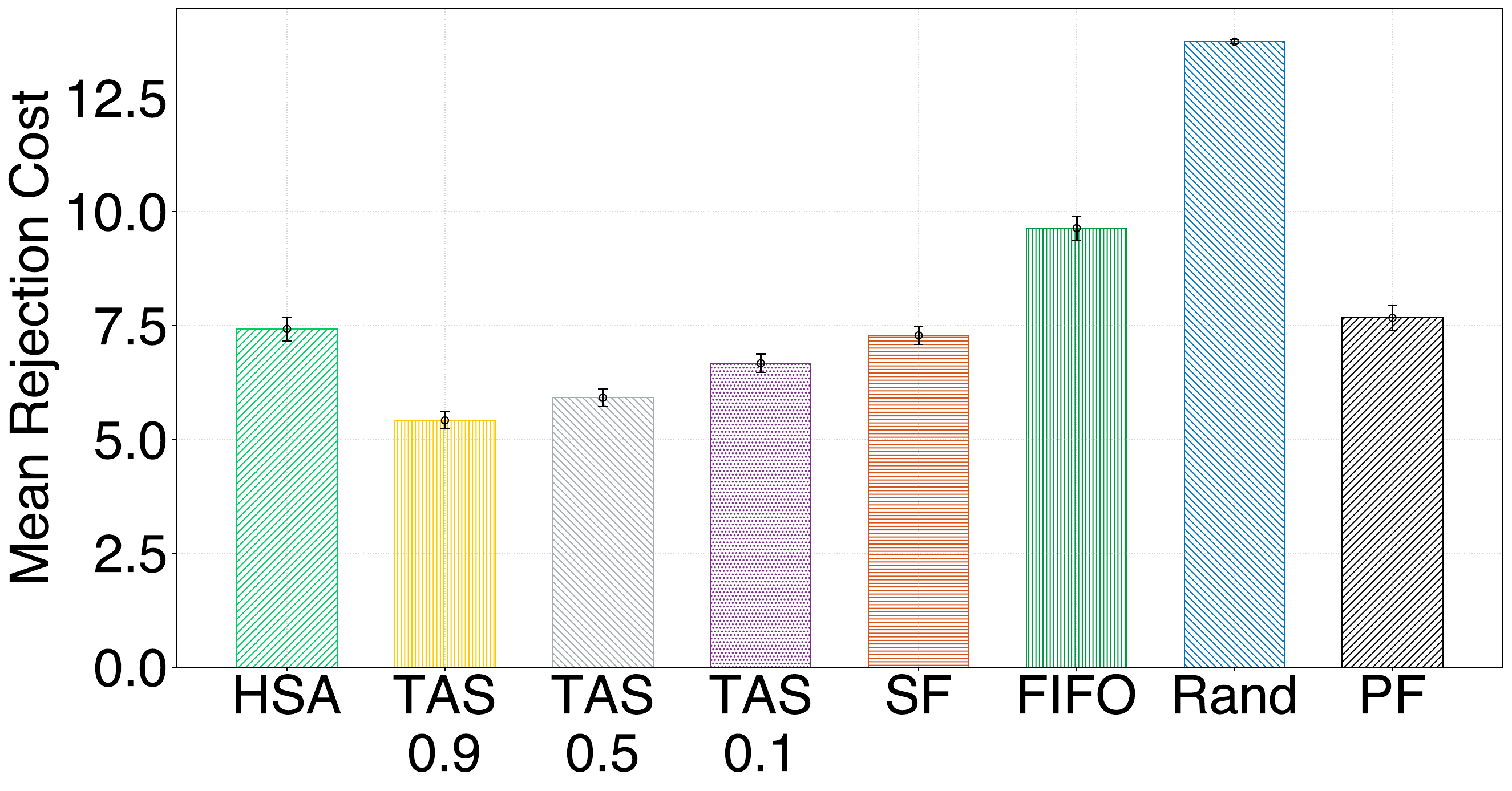}}
\subfigure[64 STAs]{
\includegraphics[width=0.29\textwidth]{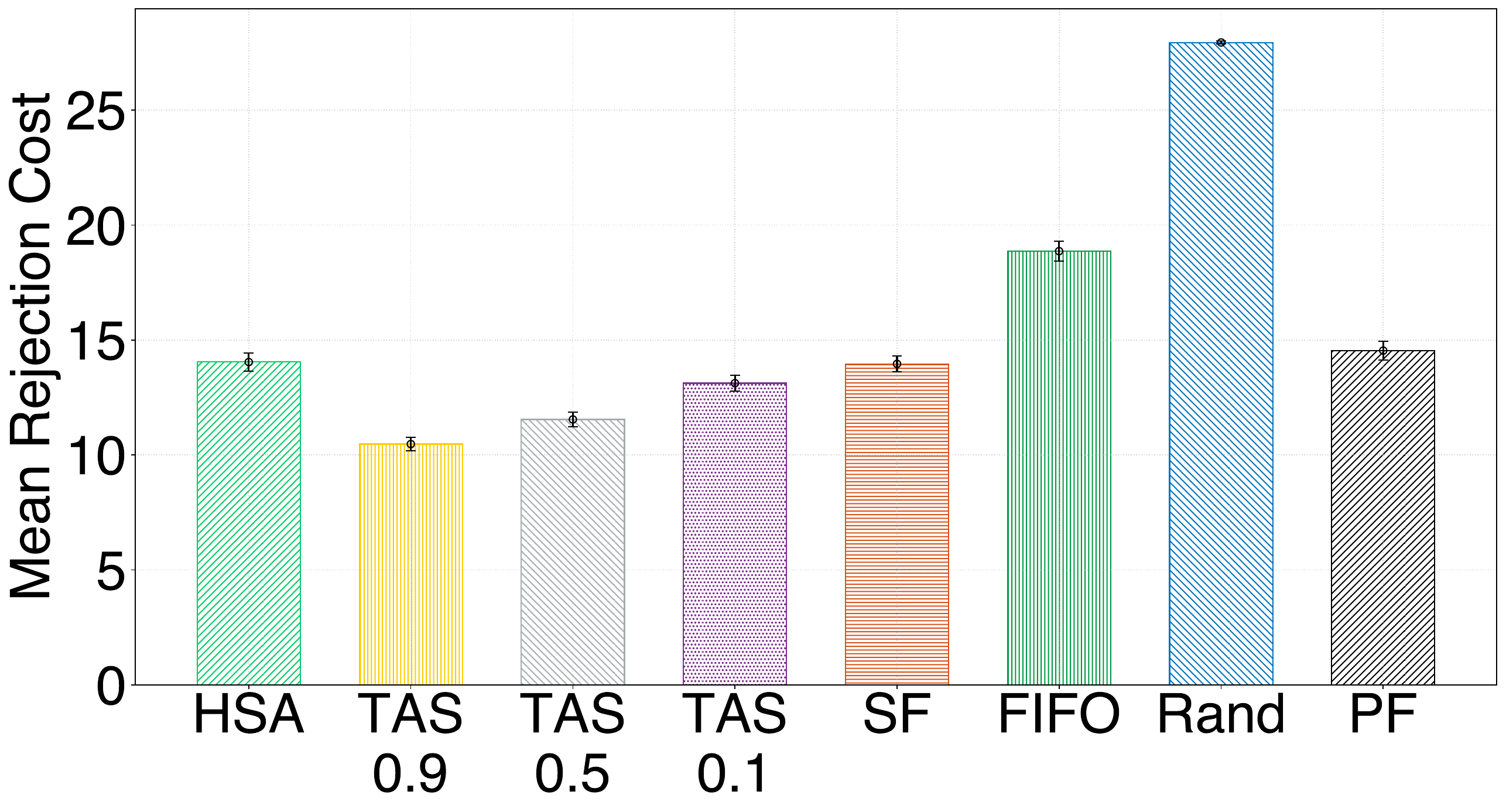}}
   \caption{Mean rejection costs, with 95\% percentiles, for TASPER and its benchmarks, considering progressively 
   complex scenarios, from (a) 16 STAs, to (b) 32 and (c) 64 STAs. From left to right, we compare TASPER (TAS) 
   to HSA, Optimum (Opt), ShortestFirst (SF), FIFO, Random (Rand), and PriorityFirst (PF) strategies. 
   The numbers under ``Opt'' and ``TAS'' represent the value to which we set $\beta$.}
     \label{fig:rejectioncost}
\end{figure*}

We now consider as performance metrics the rejection cost and the mean energy consumption, where we recall that 
the former is defined 
as the cumulative sum of the traffic priority of rejected (therefore, not scheduled) TXs while the latter is computed over 
all STAs in the network. We remark that the unscheduled STAs will remain in sleep for the entire considered period\footnote{\revision{When a STA does not receive a SP  assignment, it remains in sleep mode until the next scheduling period (in our formulation, the beacon interval). To improve efficiency, STAs do not need to resubmit their scheduling requests at the beginning of each scheduling period; instead, the AP considers the requests as valid until their deadline expires.}}.
Figures~\ref{fig:rejectioncost} and~\ref{fig:energyconsumption} present the behavior of such metrics under 
the different strategies, as the number of STAs varies.
It is worth remarking that, due to the very high complexity level of the solution, the results for the Optimum strategy 
could be computed only for 16 STAs, and, even in this case, the time needed to obtain a solution through a popular solver 
like Gurobi greatly exceeded the beacon interval duration. This further highlights that computing the Optimum is 
unfeasible in a practical scenario, fostering the need to devise fast and efficient heuristics like TASPER.

\begin{table}[!t]
\centering
\caption{Typical peak current draw in mA in different operational modes,  associated with the considered STA classes}
\begin{tabular}{|c|c|c|c|c|c|}
\hline
\textbf{STA class} & \textbf{idle} & \textbf{CCA\_BUSY} & \textbf{receive} & \textbf{transm.} & \textbf{sleep} \\ \hline
1  & 50   & 50   & 66    & 232   & 0.12  \\ \hline
2  & 40   & 40   & 40    & 140   & 0.004 \\ \hline
3   & 358  & 358  & 472   & 573   & 12    \\ \hline
4   & 294  & 294  & 388.4 & 555.29 & 11.63 \\ \hline
\end{tabular}
\label{tab:energySTA}
\end{table}

Looking at  the mean rejection cost in Fig.~\ref{fig:rejectioncost}(a), one can observe that TASPER provides 
solutions that are very close to the optimum, with a mean rejection cost just 0.04\% higher than Opt for $\beta{=}0.9$. 
Changing the value of $\beta$ for TASPER,  it is possible to obtain a mean rejection cost that varies from 3.6 
for $\beta{=}0.1$, to 3.2 for $\beta{=}0.5$, reaching 2.9 for $\beta{=}0.9$. The latter corresponds to a reduction of 
19.4\% with respect to $\beta{=}0.1$.
Although ShortestFirst \revision{and HSA also give good performance, they still fall behind TASPER by 9\% and 10\%, respectively}, even when considering the 
worst performance yielded by TASPER (i.e., for $\beta{=}0.1$). 
Similarly, TASPER outperforms HSA by achieving up to 10\% reduction in the mean rejection cost. \revision{In fact, unlike HSA, TASPER can  strategically delay high-priority flows with looser deadlines to accommodate lower-priority ones with tighter deadlines, provided all deadlines are still met. This feature allows TASPER to minimize rejected transmission requests   more effectively than HSA, which schedules transmissions greedily without exploring such trade-offs. 
}

Among all benchmark strategies,  Random is the one 
that performs worst, as it does not employ any logic to maximize the number and priority of accepted TXOPs, 
nor does it take into account energy consumption. Comparing the case with 16 STAs to the more complex scenarios with 32 and 64 STAs 
(Fig.~\ref{fig:rejectioncost}(b) and Fig.~\ref{fig:rejectioncost}(c), resp.), one can notice how the overall mean 
rejection cost increases, since the higher the number of TXOPs to be scheduled, the higher the probability that some of 
them cannot be accommodated.
However, the trend exhibited by the mean rejection cost remains consistent across all scenarios, and TASPER outperforms 
its benchmarks with a gain of 25\% over ShortestFirst, \revision{26\% over HSA}, 28\% over PriorityFirst, 44\% over FIFO and 63\% over Random, 
for $\beta{=}0.9$ and 64 STAs. When considering $\beta{=}0.5$ to improve energy consumption, and keeping 64 STAs, 
TASPER still outperforms its baselines, reducing the mean rejection cost by 18\% with respect to ShortestFirst and 59\% with 
respect to Random.

\begin{figure*}[!ht]
  \centering
\subfigure[16 STAs]{
\includegraphics[width=0.31\textwidth]{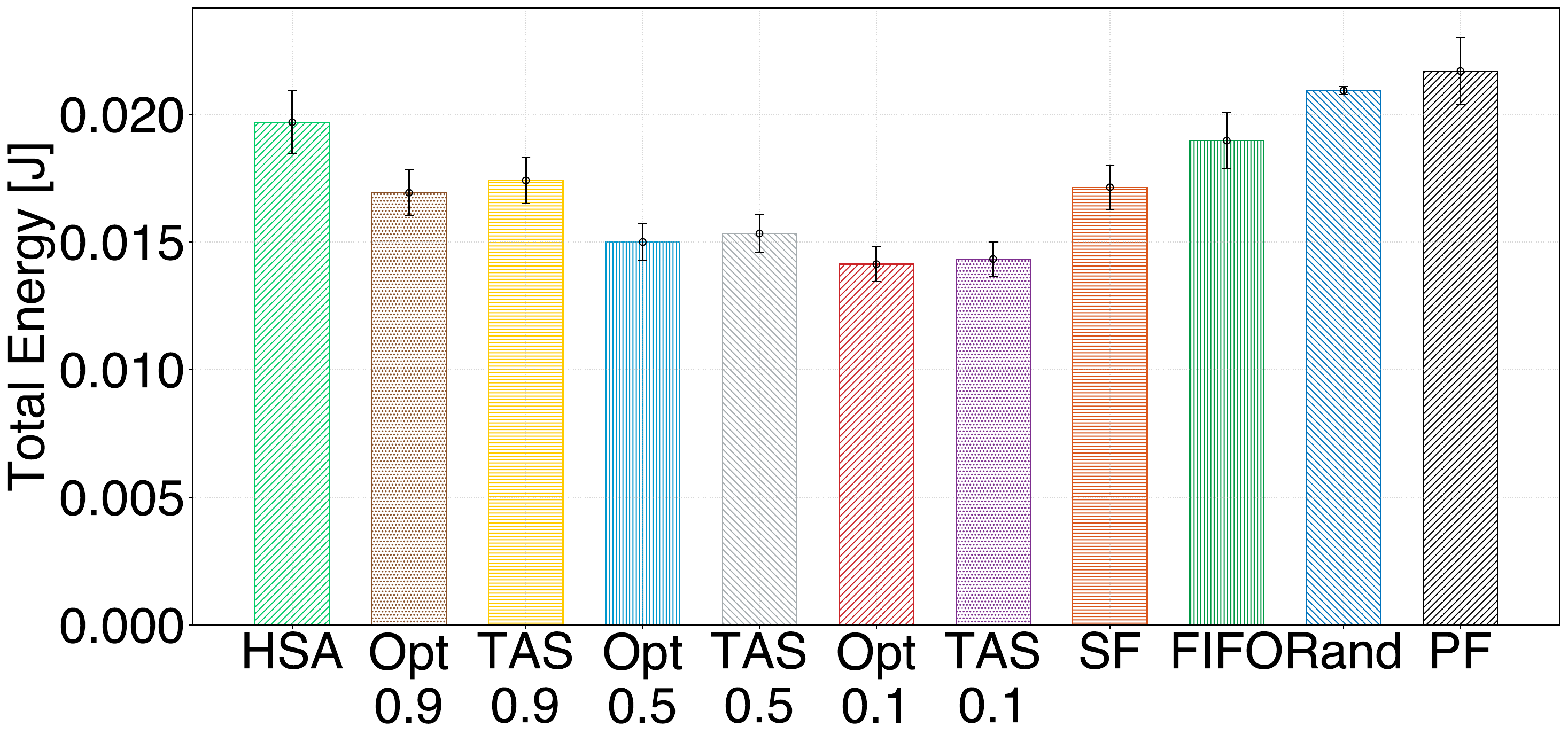}}
\subfigure[32 STAs]{
\includegraphics[width=0.29\textwidth]{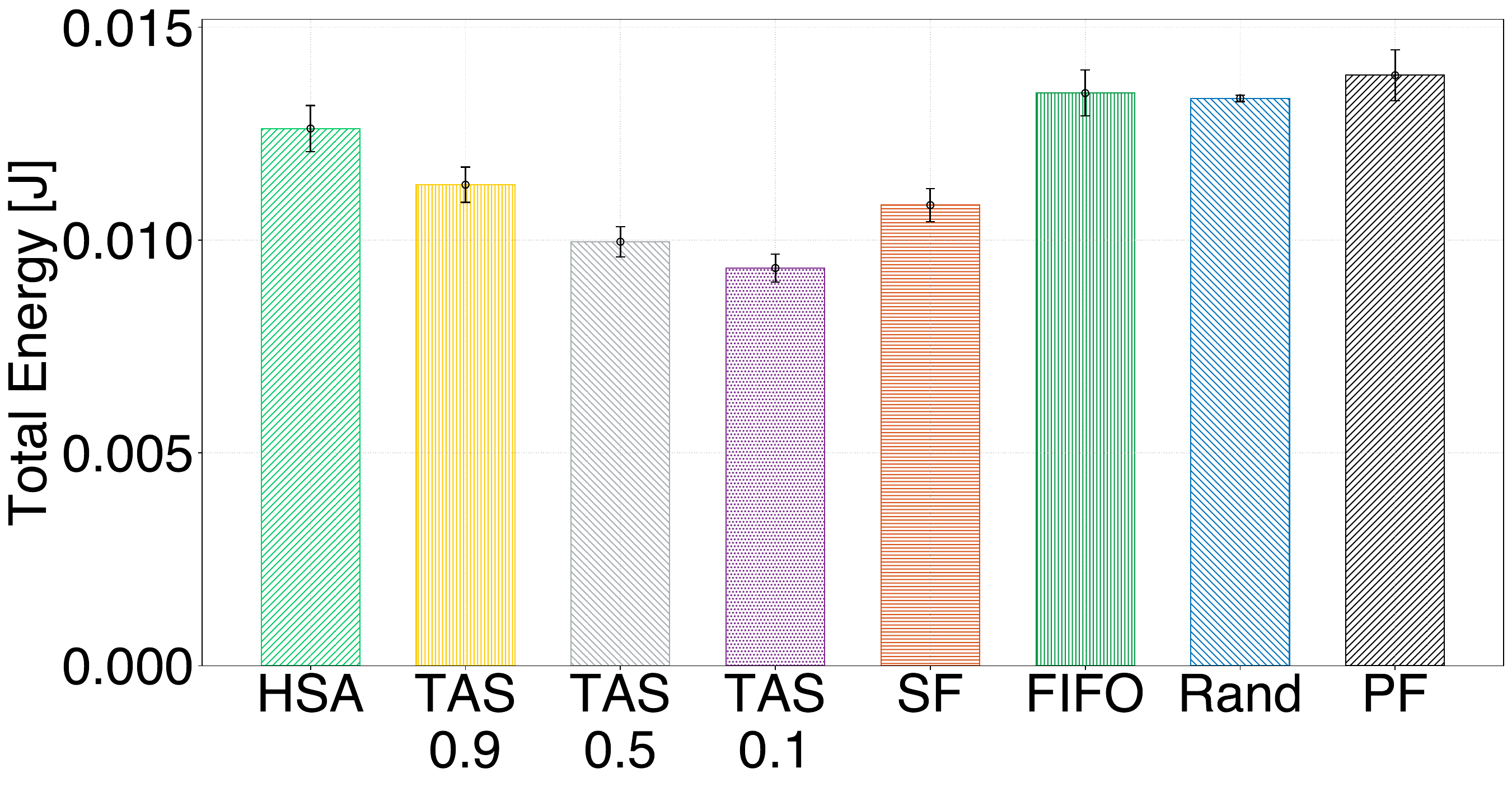}}
\subfigure[64 STAs]{
\includegraphics[width=0.29\textwidth]{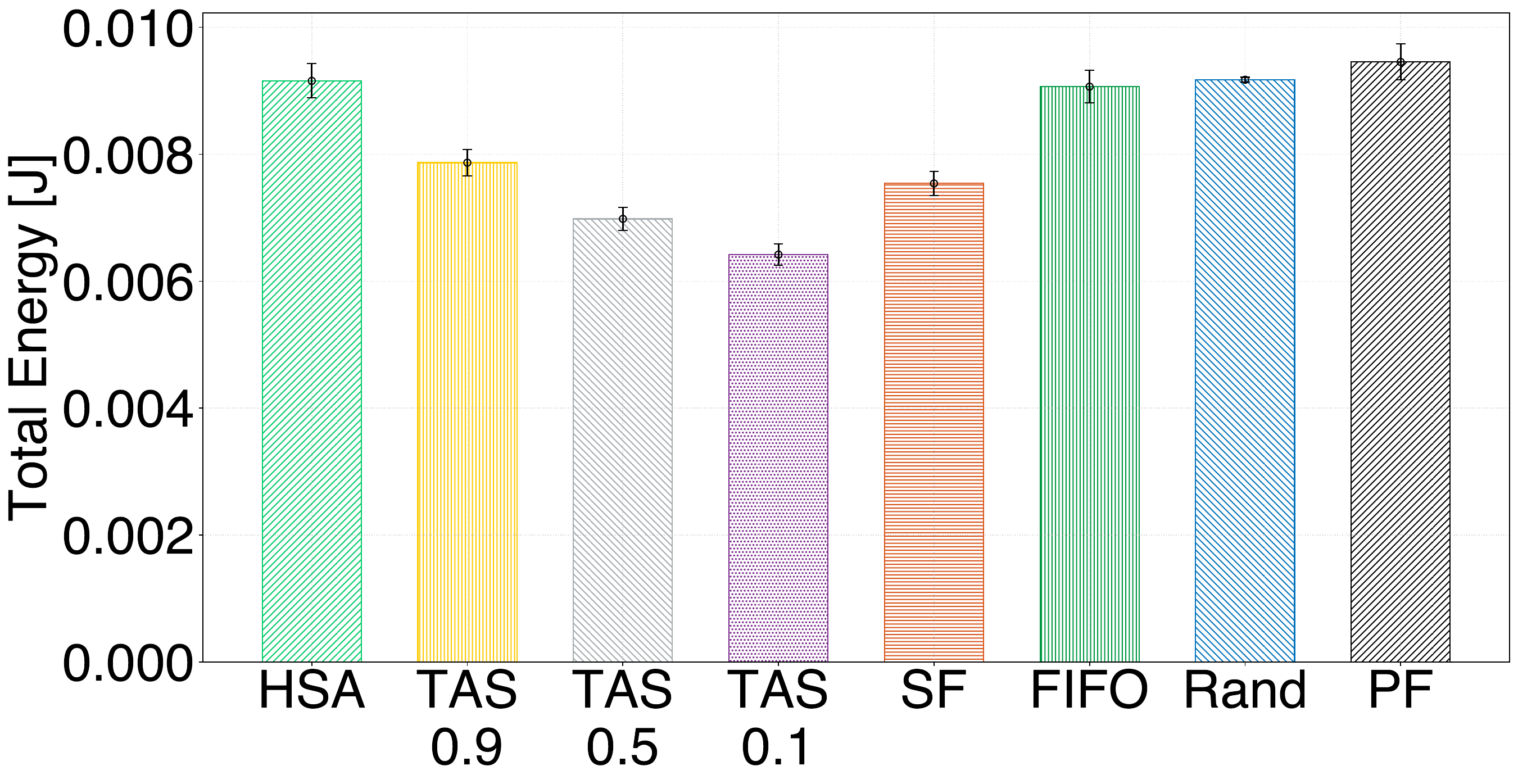}}
   \caption{Mean total energy consumption, averaged over all the STAs that transmitted at least one packet, with 95\% confidence intervals, for TASPER and the different benchmarks, considering progressively complex scenarios, from (a) 16 STAs, to (b) 32 and (c) 64 STAs. From left to right, we compare TASPER (TAS) with \riccardoandfrancesco{HSA}, Optimum (Opt), ShortestFirst (SF), First In First Out (FIFO), Random (Rand) and PriorityFirst (PF) strategies. The numbers under ``Opt'' and ``TAS'' represent the value of $\beta$.}
     \label{fig:energyconsumption}
\end{figure*}

Concerning the mean energy consumption, shown in Fig.~\ref{fig:energyconsumption}, this metric 
has been accurately calculated through ns-3-twt, as mentioned in Sec.~\ref{sub:sim_ns-3-twt}. 
Moreover, we recall that we can obtain both the total energy consumed by the STAs in each PHY layer state 
(as depicted in Fig.~\ref{fig:stateplot}), as well as the total energy consumption of each station, 
during the whole simulation time (shown in Fig.~\ref{fig:energyconsumption}). 
The first important result is evident for all considered numbers of STAs. As expected, the value of $\beta$ 
substantially affects the energy consumption:  higher values of $\beta$ lead to lower mean rejection cost, but at 
the price of a slightly increased energy consumption. Nevertheless, it is worth noting that, 
for both $\beta{=}0.5$ and $\beta{=}0.1$, the total energy consumed by TASPER is still lower than that of the best 
performing benchmark, namely, ShortestFirst. The only exception is for $\beta {=} 0.9$, when TASPER almost completely 
disregards  energy consumption. 
Looking at the case with 16 STAs, moving from $\beta{=}0.1$ to $\beta{=}0.9$ leads to a 20\% energy consumption increase 
(from 14.3\,mJ to 17.4\,mJ), while in the case of 64 STAs it grows by 75\% (from 6.4\,mJ to 11.2\,mJ). Notice, however, 
how the higher the number of STAs, the lower the mean per-STA energy consumption, as each STA will remain in sleep mode 
for a longer time.
Under TASPER, this adds to the fact that, aiming at maximizing the number of scheduled transmissions, 
TASPER tends to select shorter transmissions as the number of STAs increases. It follows that, even when active, 
the time spent by a STA in TX mode becomes shorter. 

\revision{It is worth noting that, although TASPER primarily aims at   maximizing the cumulative value of scheduled transmissions, it does so without violating  traffic deadlines. This can be observed by looking at Table~\ref{tab:miss-perc}, presenting the percentage of deadlines that are missed by the different TWT scheduling algorithms, in a  scenario with 16~STAs and a value $\beta{=}0.9$ for TASPER. Notably, TASPER, Shortest First, and HSA  consistently complete transmissions within their deadlines, thus resulting in a zero packet loss. In contrast, other strategies such as FIFO exhibit a non-negligible percentage of deadline violations.
Also, it is worth remarking that, although Shortest First, HSA and TASPER all ensure that deadlines are met, TASPER outperforms its alternatives in terms of mean rejection cost and energy savings, as shown above.}



\begin{table}[t]
  \centering
    \caption{Deadline miss percentage for TASPER ($\beta=0.9$) and its benchmarks}
  \begin{tabular}{l r}
    \hline
    \textbf{} & \textbf{Deadline miss percentage (\%)} \\
    \hline
    FIFO & 0.52 \\
    Priority First (PF) & 0.16 \\
    Random (Rand) & 0.04 \\
    Shortest First (SF) & 0.00 \\
    HSA & 0.00 \\
    TASPER $\beta=0.9$ (TAS 0.9) & 0.00 \\
    \hline
  \end{tabular}
  \label{tab:miss-perc}
\end{table}

\revision{Moreover, the packet loss rate is zero in all simulations, as the channel is always fully available to the single STA awake in each time slot. Indeed, on the one hand, TASPER computes the transmission duration starting from the packet size and the channel capacity. On the other hand, in the considered scenario the total generated traffic can be scheduled successfully when compared to the channel capacity. This lets TASPER schedule all the transmissions, leading to a null packet loss in the analyzed operational conditions. The latter have been selected to properly assess the capability of TASPER to meet the deadlines when compared to other baselines. As a matter of fact, even the optimal solution would lose packets under different operational conditions that would not allow any feasible scheduling compared to the channel capacity.}

Next, we focus on the trade-off between mean energy consumption and mean rejection cost. We can observe that a 
good trade-off can be achieved for $\beta{=}0.5$, although,  comparing TASPER with  $\beta{=}0.1$ 
(i.e., the case in which TASPER yields the worst performance) to ShortestFirst, TASPER still exhibits a 16\% gain 
with 16 STAs, 14\% with 32 STAs, and 15\%  with 64 STAs. Also, it is important to observe that the reason why 
ShortestFirst consumes less energy than the other benchmarks is because it selects shorter transmission opportunities, 
which,  thanks to shorter \gls{twt} \glspl{sp}, cumulatively require less energy. As for the rejection cost, 
Random remains the worst performing baseline, together with PriorityFirst. For instance, Random consumes 31\% more energy 
than TASPER  with $\beta{=}0.5$ and 64 STAs. In this case,  PriorityFirst does not perform well, as it 
always schedules the TXOPs with the highest priority (thus providing a good mean rejection cost), but 
totally disregarding the fact that long TXOPs lead to high energy consumption. 
\revision{When compared to HSA, TASPER yields a 22\% lower total energy consumption for  $\beta{=}0.5$ and 16 STAs, and 10\% reduction in mean rejection cost in the same scenario.  The main reason for this performance gain lies in the \mbox{energy-aware} design of TASPER, which explicitly addresses the priority–energy trade-off during scheduling, favoring energy-efficient transmissions whenever none of the traffic constraints are violated. In contrast, HSA schedules transmissions based on traffic priority and time deadline alone, without accounting for energy consumption.}
Finally, comparing TASPER to the Optimum again in the case of 16 STAs (Figures~\ref{fig:rejectioncost}(a) 
and~\ref{fig:energyconsumption}(a)), it is evident that TASPER yields an energy consumption just 
slightly higher than  Optimum, while providing an efficient scheduling solution in a much shorter time. 
As an example, for $\beta{=}0.5$, TASPER falls behind Optimum by just 1.4\%. 

Fig.~\ref{fig:concatenation} depicts the performance of the different strategies over a more complicated scheduling scenario, 
including 100 consecutive inter-beacon intervals. In this case, transmissions that are not scheduled in one inter-beacon interval 
are not rejected, rather they  are reconsidered in the next inter-beacon interval, unless their AoI deadline has expired. 
Also, now TASPER is compared only against ShortestFirst, FIFO, and PriorityFirst. Indeed, computing the Optimum in 
a larger case than that with one inter-beacon interval and 16 STAs is impractical, 
while Random is omitted due to the poor performance it exhibited in the previous scenario. 

\begin{figure}[!t]
  \centering
\subfigure[Mean rejection cost with 64 STAs]{
\includegraphics[width=0.3\textwidth]{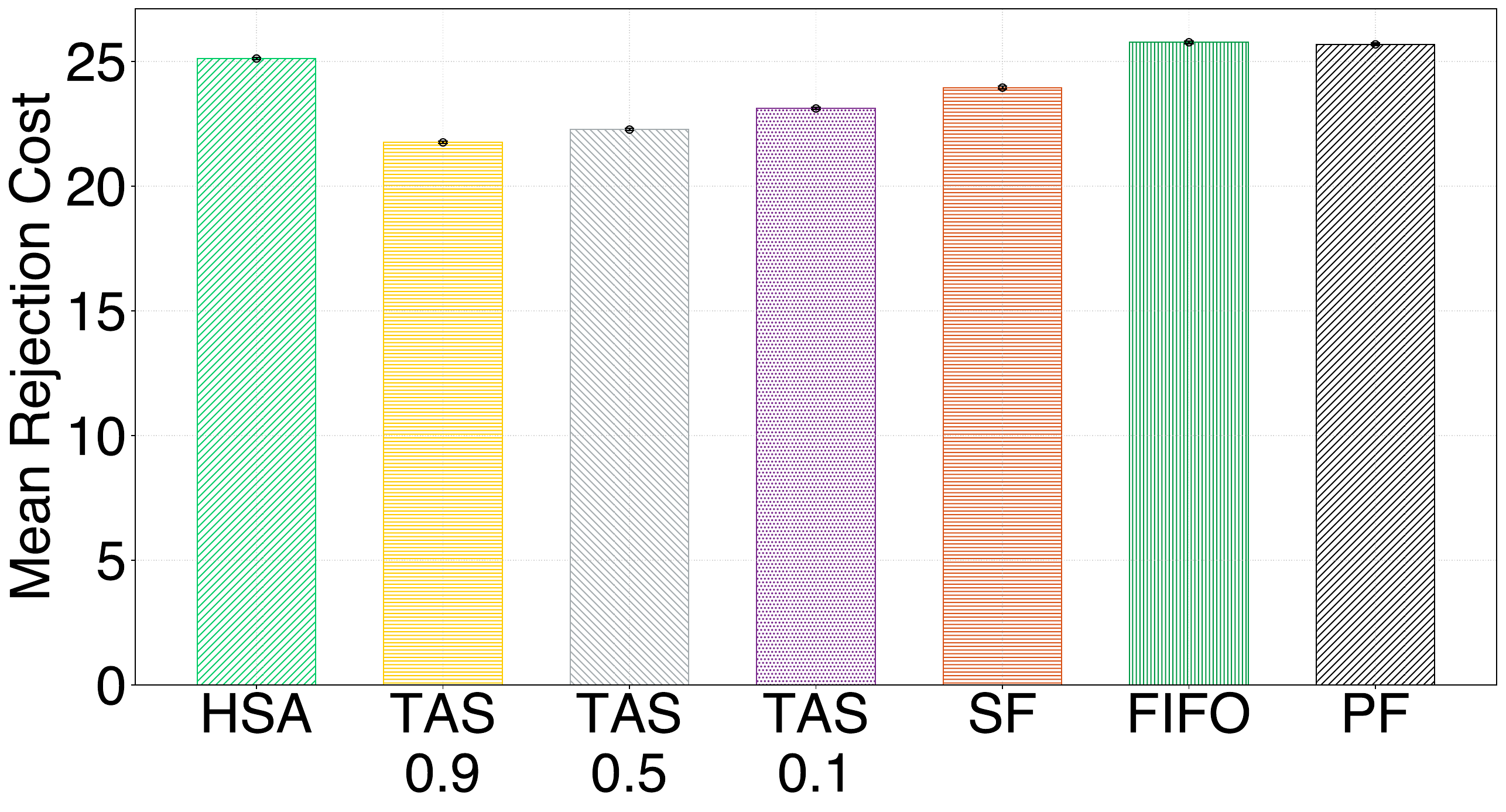}}
\subfigure[Mean total energy consumption with 64 STAs]{
\includegraphics[width=0.3\textwidth]{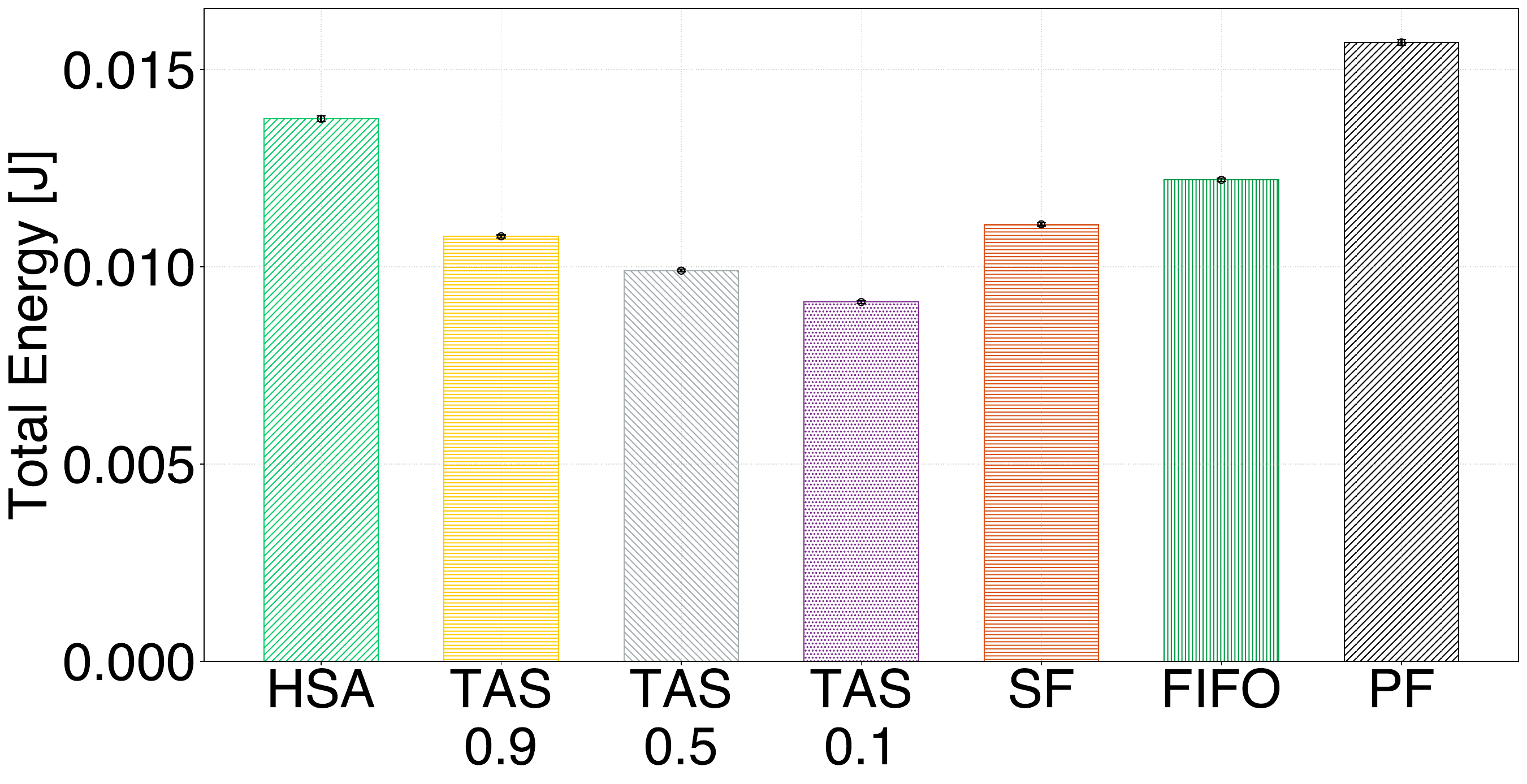}}
   \caption{Mean rejection cost (a) and mean total energy consumption (b), along with 95\% confidence intervals, 
   in the concatenated scenario with 64 STAs. \revision{From left to right, 
    TASPER (TAS) is compared against HSA}, ShortestFirst (SF),  FIFO, and PriorityFirst (PF) strategies. 
    The numbers under ``TAS'' represent the value of $\beta$.}
     \label{fig:concatenation}
\end{figure}

The results in Fig.~\ref{fig:concatenation} further highlight the benefits of the TASPER approach, especially when 
considering energy efficiency and mean rejection cost jointly. 
Looking at Fig.~\ref{fig:concatenation}(a), we notice that the mean rejection cost is significantly higher when compared 
to the performance in the previous, simpler scenario.  
When $\beta{=}0.9$, TASPER shows substantial advantages over its benchmarks, achieving a 16\% (9\%) reduced rejection 
cost compared to FIFO (ShortestFirst). 
Even when considering a smaller value of $\beta$, TASPER still outperforms its alternatives, with a 10\% (4\%) 
reduction in rejection cost compared to FIFO (ShortestFirst), for $\beta{=}0.1$. \revision{TASPER also outperforms HSA with a 13\% reduction in rejection cost, showing how our algorithm can outperform state-of-the-art solutions optimized for priority-based scheduling.}

When focusing on energy consumption, Fig.~\ref{fig:concatenation}(b), TASPER with $\beta{=}0.1$ leads to the highest 
energy savings, achieving a 18\% reduction in energy consumption compared to \mbox{ShortestFirst}, \revision{34\% reduction relative to HSA} and, remarkably, a 42\% reduction 
against PriorityFirst. This confirms the ability of TASPER not only to serve a higher number of transmissions, but also 
to significantly improve the overall energy consumption.
The case for $\beta{=} 0.9$ is even more insightful. The experiments over a single inter-beacon interval showed that, 
when $\beta{=} 0.9$, TASPER consumes more energy than ShortestFirst, as its main objetive is to minimize the mean rejection cost. 
Thus, one may wonder what leads to a different trend in such a more complex scenario. 
The answer lies in the fact that TASPER is able to handle the postponed transmissions very effectively, 
much better than ShortestFirst, ultimately achieving a lower energy consumption even if energy saving is not its primary goal. 

Finally, Fig.~\ref{fig:concatenation} again underlines the importance of selecting a value of $\beta$ 
that represents a good balance between rejection cost and energy consumption. Actually, TASPER with $\beta{=}0.1$ 
yields an energy consumption that is 15\% lower than TASPER with $\beta{=}0.9$ and 8\% lower than TASPER with $\beta{=}0.5$. 
However,  it also exhibits a higher rejection cost (by 6\% w.r.t $\beta{=}0.9$ and 4\% w.r.t. $\beta{=}0.5$).



\section{Experimental Validation}\label{sec:exp}

To validate the observations and the proposed approach in real-world operational conditions, we now 
 show the benefits of our approach 
 in delivering time-critical traffic and in saving energy, using our Wi-Fi IIoT testbed built using commercial TWT-capable STAs and AP. 
In the following, we start  by introducing the experimental testbed we designed  (Sec.~\ref{sec:exp_testbed}). Then,  
we present the configuration employed to first compare  TWT against NoTWT,  and the 
outcomes of such a comparison (Sec.~\ref{sec:exp_twt_setup}). 
Finally, we analyse the performance measured with TASPER versus ShortestFirst, 
which, from the performance evaluation in Sec.~\ref{sec:peva}, resulted to be the 
most performing alternative among the considered benchmarks 
(Sec.~\ref{sec:exp_tasper_setup_res}). 

\subsection{Our Wi-Fi IIoT testbed}\label{sec:exp_testbed}
To represent a real-world IIoT deployment, we selected commercial components and built a testbed 
 comprising 10 STAs and 1 AP, as depicted in Fig.~\ref{fig:testbed}(a). Specifically, the STAs are Espressif ESP32-C6-DevKitC-1, 
 accessible IoT boards that can be purchased for less than 10\,USD. They are based on the ESP32-C6 System on Chip, 
 featuring a 32-bit RISC-V processor, Wi-Fi 6 connectivity (limited to 20-MHz channel in the 2.4\,GHz band, SISO), 
 and support for Individual TWT. These boards can be programmed using the open-source Espressif IoT Development Framework (ESP-IDF), 
 which allows for firmware coding and flashing. ESP-IDF documentation includes an application example for TWT, which has been used 
 to develop the firmware for configuring and running the experiments.
The AP is a Synology WRX560, a TWT-compatible Wi-Fi 6 router, which has been configured to assign static IP addresses to the STAs. 
This simplifies the board configuration since a single firmware can be flashed in all STAs; then, the desired behavior of each 
STA (e.g., the transmission schedule) is inferred based on the assigned IP address. Finally, to collect experimental data, 
an Ubuntu 22.04 host,  acting as the traffic-receiving endpoint, is connected to the router via the 1 GigE LAN interface. 
It integrates an Intel AX200 802.11ax transceiver, which we use to monitor and timestamp frames exchanged in the Wi-Fi network. 
To sniff traffic, we employ the \textit{airmon-ng} and \textit{Wireshark} tools. As the network is secured with WPA2, to access the contents of the transmitted frames, 
which is useful for characterizing the data traffic, the latter is configured with the Pre-Shared Key to decrypt 
the AES-encrypted traffic sent by STAs. Thus, data are obtained by merging the information derived from the received 
application traffic and the sniffed frames. 

\subsection{TWT vs.\ NoTWT: Experimental setup and results}\label{sec:exp_twt_setup}
To demonstrate the benefits of employing TWT to schedule traffic transmissions, we coded 
an application to make the boards transmit one TX of 4,800 bytes each (split in 8 frames of 600 bytes payload) to the Ubuntu 
host once every 10 beacon intervals  (102.4\,ms). To increase network saturation, the allowed 
\gls{mcs} is restricted to MCS0 (PHY rate of 8.6\,Mb/s). To avoid possible overlaps with the AP beacon, 
which could increase transmission delays, all boards generate data 8.192\,ms after the Target Beacon Transmission Time (TBTT).  
This delay is motivated by the AP's behavior: we observed that at each beacon interval, 
the AP issues two consecutive beacons (the second one with a hidden SSID), each requiring up to 3.5\,ms of airtime. 
All boards synchronize their clocks with the AP using the Wi-Fi timing synchronization function (TSF), 
which the Wi-Fi driver exposes to the application generating traffic through ESP-IDF.
To avoid ARP traffic, the Ubuntu host MAC and IP addresses are stored in the boards' firmware.
We tested two configurations:

    \textbullet\, \textbf{NoTWT}, where every board immediately tries to access the channel and transmit traffic 
    as soon as data is generated, as in Wi-Fi networks where TWT is not employed;
    
\textbullet\,  \textbf{TWT}, in which each STA has set up a TWT agreement with the AP for a non-overlapping TWT \gls{sp}. 
    Specifically, in ascending ID order, every STA \gls{sp} starts 9.42\,ms after the previous one. Allowing such a time interval avoids any overlap between subsequent TXOPs,  with each TXOP including also the Block ACK and  
    possible retransmissions, 
    while making all STAs  transmit within a beacon interval.

Each configuration has been tested for more than 12 hours. During this time, we captured more than 35M frames, 
corresponding to more than 48K transmissions by each STA. 


\begin{figure*}
  \centering
\subfigure[Experimental testbed. A 10-port USB hub powers the ESP32 boards. 
        On the screen, a live Wireshark capture shows the sniffed frames. On the right, the Synology AP agrees on the TWT \glspl{sp} 
        and routes the Wi-Fi traffic.]{
\includegraphics[width=0.28\textwidth]{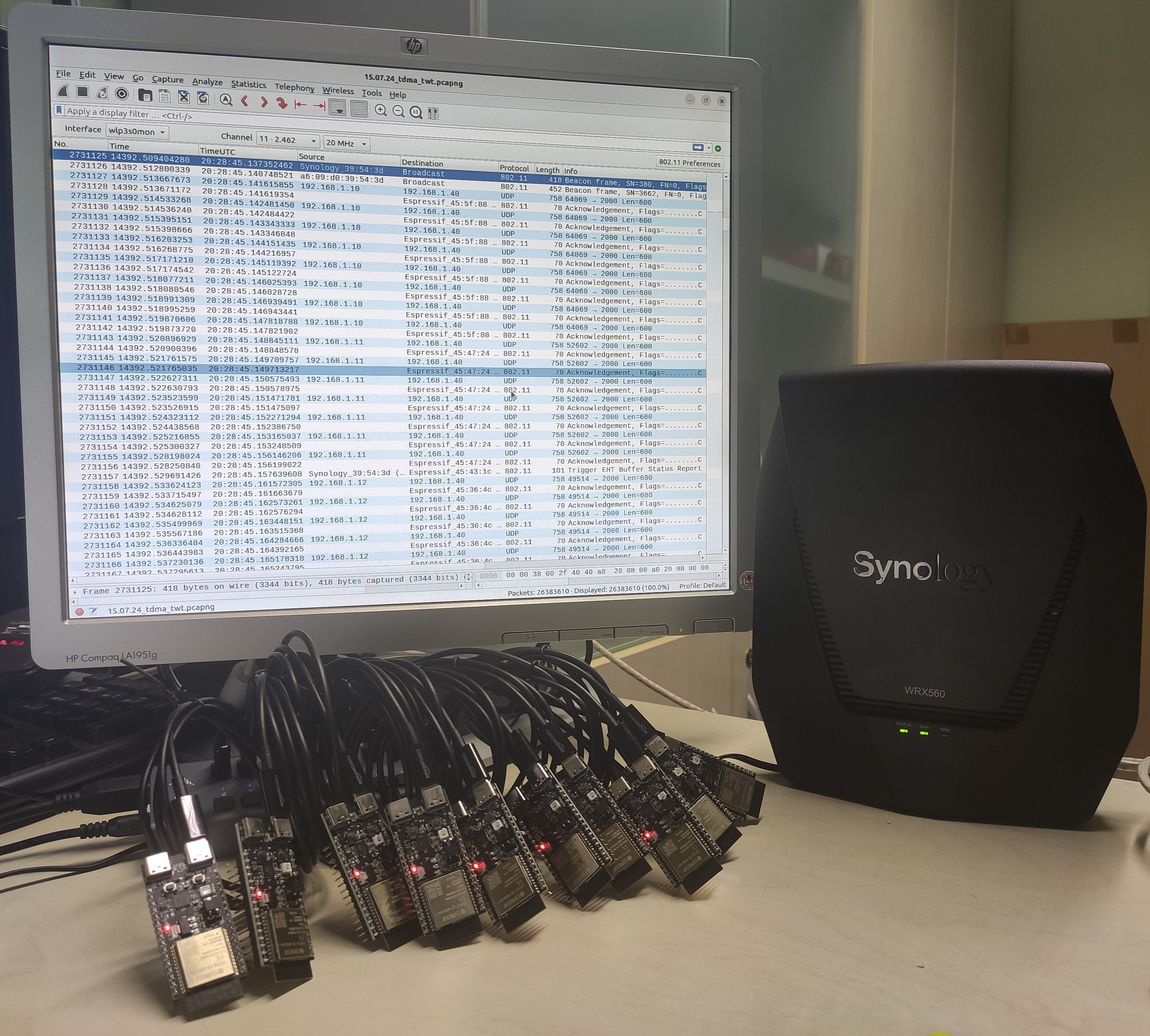}}
\hspace{0.03\textwidth}
\subfigure[NoTWT scenario. The measured AoI is reported for each STA. 
        STAs coordinate through the DCF (i.e., using random backoff), which leads to high AoI.]{
\includegraphics[width=0.28\textwidth]{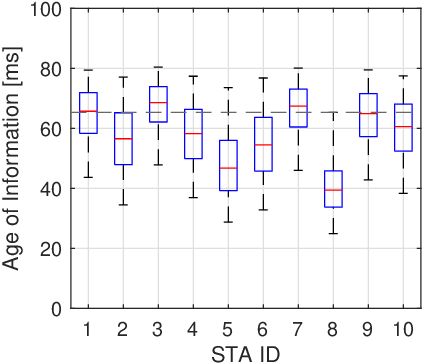}}
\hspace{0.03\textwidth}
\subfigure[TWT scenario. The measured AoI is reported for each STA. STAs have dedicated TWT \glspl{sp} during which they have exclusive availability of the wireless medium.]{
\includegraphics[width=0.28\textwidth]{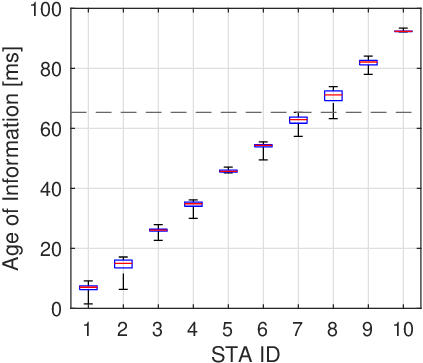}}
   \caption{Experimental testbed and TWT vs.\ NoTWT results. On each box of the box plots, the central mark indicates the median, while the bottom and top edges of the box indicate the 25th and 75th percentiles (resp.). The whiskers extend to the 5th and the 95th percentiles. The horizontal dashed lines identify the AoI threshold not attainable by 95\% of transmission by all STAs in the NoTWT scenario}
   \label{fig:testbed}
\end{figure*}

{\bf AoI measurements.} The results we obtained are presented in the box plots in Figures\,\ref{fig:testbed}(b) and (c), which show the AoI 
measured for each STA in the NoTWT and TWT scenarios, respectively. We recall that the AoI  measures the interval 
between the time of data generation  at the sender and the time of its reception at the receiver. 

When TWT is not used, the AoI varies significantly because of the random backoff:  on average, the median AoI is 58.2\,ms and 
its standard deviation is 12.6\,ms.
More importantly, we observe that none of the STAs is able, for more than 95\% of the times,  to send its traffic 
within 65.3\,ms from the instant when it was generated (horizontal dashed line in Fig\,\ref{fig:testbed} (b)). 
Note that a complete transmission comprises 8 frames, which are not transmitted sequentially. In fact, between 
a frame and the next one, other STAs gain access to the channel, which delays transmission completion for all STAs. 
Thus, these results clearly demonstrate that the NoTWT configuration is unsuitable for time-sensitive networking. 

Conversely, under the TWT mechanism, STAs show consistent AoI  performance, according to their time schedule, 
thanks to the allocation of TXOPs by the AP. On average, the median AoI is 49.1\,ms and its standard deviation is 
2.5\,ms, both lower than those of the NoTWT scenario (16\% and 80\% lower, respectively). 
Here, STAs send all of their frames before another STA attempts to do the same. Because of that, STAs with stricter 
AoI deadlines can be scheduled first, while those with looser AoI deadlines can be left for later. As an example, STA\,1 
completes its transmission within 9.2\,ms more than 95\% of the time. Remarkably, this approach works well 
for a significant number of STAs, as 7 out of the 10 STAs are able to transmit with an AoI consistently 
(i.e., for more than 95\% of the times) lower than 65.3\,ms.  We recall that in NoTWT scenario no STA 
could achieve the same 65.3\,ms AoI for 95\% of the times.  

{\bf Average power consumption measurements.} Not only does TWT improve the AoI performance, but it also allows for 
energy saving as STAs can power off the Wi-Fi transceiver when they are dozing. 
To estimate the energy gain brought by TWT,  we here refer to the average power consumed in the NoTWT and TWT cases. 

First, we estimate the average power consumption in the two cases  
using the datasheet of ESP32-C6-WROOM-1~\cite{esp32-wroom1}, i.e.,  
the module comprising the ESP32-C6 chip, clock oscillator, flash memory, and PCB antenna that are installed in the 
ESP32-C6-DevKitC-1. It reports a current consumption of 251~mA in the transmit mode, 78~mA in the receive mode, and 30~mA 
in the  sleep mode (the energy consumed to change states is not disclosed).  
We assume that in the NoTWT scenario, all STAs generate traffic shortly after the first beacon, then remain active for an entire 
beacon interval (10\% of the total time in the transmit mode), while they access the channel and transmit their traffic. 
After that, they remain active for the following 9 beacon intervals (90\% of the total time in receive  state) while waiting for 
the next traffic generation.
In the TWT scenario, instead, STAs are  in transmit mode only during their assigned TWT \glspl{sp}, which last about 10~ms 
(transmit mode for 1\% of the total time). In the remaining time, they doze to save energy and avoid contending for 
the channel while other STAs are transmitting (99\% of time in modem sleep state). 
Note that a voltage of 5\,V is considered to calculate the power: in the ESP32-C6-DevKitC-1, the 3.3V ESP32-WROOM-1 power supply 
voltage is provided by a voltage linear regulator powered by the USB 5V rail.
In summary, in the traffic scenario described in Sec.~\ref{sec:exp_twt_setup}, we can compute the STAs' average power consumption as: 
    \[P_\text{NoTWT} = 10\% \cdot P^\text{tx} + 90\% \cdot P^\text{rx} = 0.48 \text{W}\]
    \[P_\text{TWT} = 1\% \cdot P^\text{tx} + 99\% \cdot P^\text{sleep} = 0.16 \text{W}\,.\] 
In other words, using TWT, the ESP32 STAs should be able to save 66\% of power compared to 
the NoTWT scenario, consuming 0.16~W instead of 0.48~W on average.

\begin{table}[tb]
\centering
\vspace{0.03in}
\caption{Average testbed power consumption}
\begin{tabular}{|c|r|r|}
\hline
Scenario & Total [W] & Per STA (approx.) [W]\tabularnewline
\hline \hline
NoTWT & 6.343 & 0.47\tabularnewline
\hline
TWT & 3.454 & 0.24\tabularnewline
\hline
\end{tabular}
\label{tab:energy}
\end{table}

We then compare the estimated values of average power consumption with those measured in the testbed, as presented in Tab.~\ref{tab:energy}. We took these measurements with the RCE PM500 power meter connected to the USB hub powering 
the 10 ESP32 boards. In  Tab.~\ref{tab:energy}, we also report the approximate per-STA consumption, computed by subtracting the power loss of the rectifier 
in the power supply of the USB hub (assuming an 80\% efficiency) and the power consumption of the integrated circuit(s) 
inside the hub itself (450~mW).
Under NoTWT, we measure a 6.3~W power consumption, which translates to a STA power consumption of 0.47~W. 
Instead, when TWT is enabled, the total power consumption decreases to 3.4~W, equivalent to a per-STA consumption of 
0.24~W, which is 49\% lower than for the NoTWT case.  
We observe that, in the NoTWT scenario, the estimated power closely matches the measured power. 
Conversely, in the TWT scenario, the estimated consumption is 33\% lower than the measured one. This discrepancy is due   
to factors such as energy consumption while  transitioning from an operational mode to another, which is not accounted 
for in our estimations, and modem wake-ups outside the scheduled SPs for clock re-synchronization. 
Nonetheless, the results confirm that TWT  brings substantial power savings to Wi-Fi\,6 networks, deployed with off-the-shelf 
 devices, supporting time-sensitive applications.

\subsection{TASPER vs.\ ShortestFirst: Experimental setup and results}\label{sec:exp_tasper_setup_res}
We now focus on the experimental comparison between TASPER (with $\beta {=} 1$) and ShortestFirst, 
that is, the best-performing alternative out of those we considered in Sec.~\ref{sub:benchmar}.
The testbed configuration is the same as for the first case study in Sec.~\ref{sec:exp_twt_setup}, 
but we now consider the TASP instance represented in Fig.~\ref{fig:exp_res}(a), which is more representative of a 
real-world scenario where stations have different traffic patterns. This schedule comprises STAs that generate 
periodic traffic with short transmission durations and loose AoI deadlines (STAs $m{=}$1,3,4,6,7,8,9,10) and two STAs 
generating a long transmission, one with a tight AoI deadline (STA $m{=}2$) and the other with a loose one (STA $m{=}5$). 
For each STA, we set a TX priority $p_j {=} 10-m$ and map the TXOP duration $\tau_j$ to a packet size of 1,272 bytes which, 
 using MCS${=}0$, corresponds to a number of frames of $\frac{2}{5}\tau_j$.  

\begin{figure*}
  \centering
\subfigure[Problem instance. White boxes denote the period between TX generations and AoI deadlines; blue bars indicate the TX duration.]{
\includegraphics[width=0.28\textwidth]{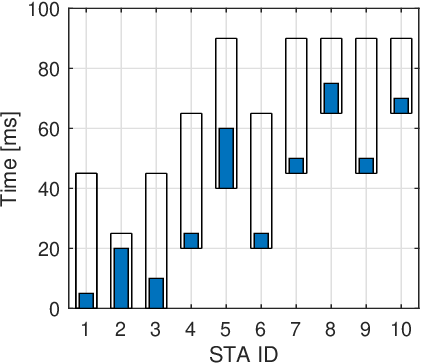}}
\hspace{0.03\textwidth}
\subfigure[TASPER: All STAs are scheduled; for each of them, the measured AoI is reported.]{
\includegraphics[width=0.28\textwidth]{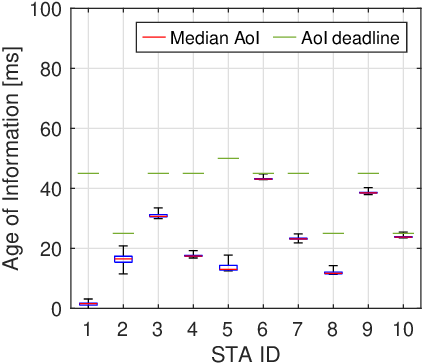}}
\hspace{0.03\textwidth}
\subfigure[ShortestFirst: The AoI is reported for each STA, except for the 2nd, since it cannot be scheduled.]{
\includegraphics[width=0.28\textwidth]{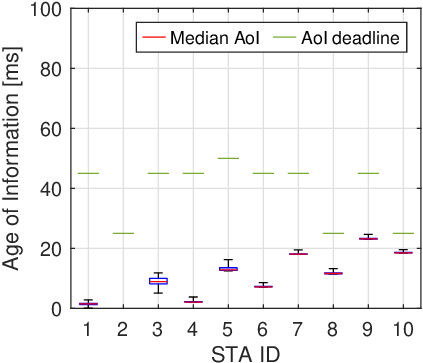}}
   \caption{Problem instance and experimental results: On each box of the box plots, 
   the central mark indicates the median, and the bottom and top edges of the box indicate the 25th and 75th percentiles, 
   respectively. The whiskers extend to the 5th and the 95th percentiles. 
     Light green segments denote the AoI deadlines.}
     \label{fig:exp_res}
\end{figure*}

The results are illustrated in Figures~\ref{fig:exp_res}(b) and (c), which show the AoI measured for each scheduled STA 
with the TASPER and the ShortestFirst strategy, respectively. 
TASPER schedules all STAs in the same order as in the problem instance representation of Fig.~\ref{fig:exp_res}(b), 
with no unused air time between subsequent \glspl{sp}. All the STAs always end their transmissions within the corresponding AoI deadlines, 
except for the 10th STA, whose transmissions exceed the AoI deadline 5.8\% of the time. 
On average, we measure an AoI of 22.05~ms. Interestingly, we observe that on rare occasions the first STA ends its TX in less 
than 1~ms. An imprecise clock synchronization between the Wi-Fi TSF and the board clock, allowing the STA 
to start the transmission earlier, causes this behavior. Indeed, we are not employing triggered-enable TWT to signal STAs 
when their SPs start.  
In contrast, ShortestFirst  first schedules STA 1 and 3,
which have shorter transmission durations, leaving no room to meet the AoI deadline for the traffic of STA~2. 
Since it cannot schedule the TX from STA\,2, it has more time to schedule the requesting STAs,   yielding an  average AoI of 11.66~ms (i.e., 47\% lower than TASPER's). 
However, it fails to schedule  STA\,2, whose service request is thus dropped.  

These results confirm that TASPER successfully leverages TWT to schedule time-sensitive traffic of STAs. 
Notably, using TWT-compatible STAs, broadly available in the market, we demonstrate TASPER's capability of admitting more 
transmissions than alternative approaches such as ShortestFirst by maximizing airtime utilization and avoiding 
AoI deadline violations. 

\section{Related Work}\label{sec:related_work}
Compared to the previous amendment  
(i.e., IEEE 802.11ac), IEEE 802.11ax (a.k.a.  Wi-Fi 6~\cite{802.11ax}), offers new features such as \gls{ofdma}, uplink \gls{mumimo}, and 
\gls{twt}~\cite{802.11ax-tutorial}. \gls{twt},  refined in IEEE 802.11ax but originally proposed in IEEE 802.11ah, 
provides STAs with a new mechanism that allows them to agree on their active and doze times. 
This reduces energy consumption and  network delays, as it decreases medium access 
contention~\cite{twt-tutorial}. Although the standard defines the TWT mechanism, it does not provide any rules or criteria for agreement settings.  
Finding the optimal agreement is thus an open issue, which can be tackled by  formulating a scheduling problem that aims at allocating resources (or machines) to jobs over time periods to optimize one or more objectives. 

\revision{The problem of traffic scheduling is not exclusive to wireless networks such as Wi-Fi; it has also been extensively studied in Ethernet-based networks through the Time-Sensitive Networking (TSN) framework. In particular, the Time-Aware Shaper (TAS) is a TSN mechanism that enables deterministic transmission by controlling gate operations on egress queues of network nodes according to a global time schedule, thus minimizing queuing delays for time-triggered traffic. A comprehensive review of scheduling algorithms designed for TAS is presented in \cite{surveyScheduling}.
}

\revision{Extending TSN concepts to wireless networks introduces additional challenges due to the shared and time-varying nature of the wireless medium. In this context, IEEE 802.15.4 \cite{802.15.4} and LoRa \cite{LoraSemtechOverview} have emerged as two prominent communication technologies enabling low-power wireless networking in industrial and IoT scenarios \cite{chen2009real, chen2014wirelesshart, IndustrialLoRa1, IndustrialLoRa2}. IEEE 802.15.4 provides a low-rate, short-range PHY with support for energy-efficient operation and simple MAC mechanisms such as CSMA/CA. 
LoRa 
enables ultra-long-range communication with extremely low power consumption,  
but at the cost of low data rates and long transmission times, making it suitable for sparse, delay-tolerant applications.
Building on these technologies, a number of industrial protocols and research efforts have aimed to introduce more predictable and time-sensitive behavior. Notably, WirelessHART \cite{iec62591_2016_wirelesshart,chen2014wirelesshart, chen2018joint} extends IEEE 802.15.4  by employing centralized TDMA-based scheduling and frequency hopping to achieve deterministic and reliable communication in mesh topologies, especially in process automation and industrial monitoring. RTLoRa \cite{leonardi2019rt}, instead, modifies LoRaWAN’s purely asynchronous MAC by introducing slotted ALOHA access and lightweight scheduling primitives, with the goal of reducing collisions and supporting soft real-time traffic patterns in dense deployments, which makes it suitable for industrial use cases.  
While both WirelessHART and RTLoRa represent significant advances in bringing determinism and efficiency to low-power wireless networks, they operate under fundamentally different design assumptions compared to Wi-Fi-based solutions. WirelessHART works over a low-rate PHY, relying over fixed time-slot communications, multi-hop mesh routing, and channel hopping. 
RTLoRa, instead, targets long-range communication scenarios, trading off latency and throughput for energy efficiency and coverage, and is constrained by strict duty-cycle regulations and limited bandwidth. In contrast, Wi-Fi networks such as those targeted in this work rely on single-hop communication,
and are designed to support high data rates and highly dynamic traffic patterns.
}

\begin{table*}[t]
\centering
\caption{Comparison of TASPER with related works on wireless industrial networks and TWT scheduling strategies}
\begin{tabular}{|p{1.8cm}|p{3.2cm}|p{2.8cm}|p{1cm}|p{1.4cm}|p{1cm}|p{1cm}|p{2cm}|}
\hline
\textbf{Work (Year) [Ref]} & \textbf{Optimization Objective} & \textbf{Optimization Technique} & \textbf{PHY Layer} & \textbf{Real-Time Guarantee} & \textbf{Energy-Aware} & \textbf{AoI Support} & \textbf{Evaluation Method} \\
\hline
\textbf{TASPER [Ours]} & Maximize traffic acceptance and energy efficiency under AoI constraints & MILP model (TASP) solved via heuristic (TASPER) & IEEE 802.11ax & Soft (AoI-based) & Yes & Yes & ns-3 simulation and real testbed \\
\hline
Chen et al. (2018) \cite{chen2018joint} & Minimize weighted end-to-end delay & Iterative hop-wise scheduling algorithm (HSA) based on MWIS & IEEE 802.15.4 & Soft & No & No & MATLAB simulation \\
\hline
RT-LoRa by Leonardi et al. (2019) \cite{leonardi2019rt} & Provide bounded end-to-end delay & Centralized, superframe-based TDMA strategy & LoRa & Hard & Yes (via QoS classes) & No & OMNeT++/FLoRa simulation \\
\hline
Chen et al. (2021) \cite{9247501} & Maximize throughput under delay constraints & Genetic algorithm & IEEE 802.11ax & Soft (delay-bound) & Yes & No & Simulation with traffic generator \\
\hline
Yang et al. (2021) \cite{twt-scheduling} & Maximize uplink throughput and fairness for TWT STAs & Heuristic grouping: max-rate and proportional-fair algorithms & IEEE 802.11ax & None & No & No & ns-3 simulation (OFDMA uplink) \\
\hline
Schneider et al. (2022) \cite{twt-tsn} & Minimize latency and jitter for TSN traffic & Time-aware scheduling with TWT SP alignment to TAS & IEEE 802.11ax & Soft (latency/jitter bounds) & No & No & Real hardware testbed \\
\hline
Chen et al. (2022) \cite{twt_obss} & Maximize energy efficiency under delay bounds in OBSS & Graph-coloring grouping + deterministic algorithm & IEEE 802.11ax & Soft (delay-bound) & Yes & No & Simulation with traffic generator \\
\hline

Peng et al. (2024) \cite{10652675} & Minimize fixed-duration TWT SPs usage under throughput constraints & Greedy algorithm & IEEE 802.11ax & None & Yes & No & Custom simulator \\
\hline
Dang et al. (2024) \cite{twt-scheduling-w-predictions} & Improve channel efficiency and reduce power consumption under dynamic traffic & Deep learning traffic prediction + heuristic TFST scheduler & IEEE 802.11ax & Soft (service-specific) & Yes & No & Custom simulator with traffic model \\
\hline
\end{tabular}
\label{tab:related_work_comparison}
\end{table*}

\revision{Concerning Wi-Fi based networks, \cite{tsn-wifi} gives an in-depth overview of the IEEE 802.11 standards evolution to support deterministic communication. IEEE 802.1Qbv for time-aware queue gating, IEEE 802.1CB for frame replication and elimination, and AP-driven OFDMA scheduling, can be leveraged to reduce contention and enhance predictability without relying on restrictive \gls{twt} operations.}

\revision{Building upon this, }scheduling strategies based on \gls{twt} have been investigated by several works. Using broadcast TWT, the scheduler in~\cite{9247501}, 
based on a genetic algorithm, takes advantage of OFDMA to avoid channel contention: 
by allocating dedicated resources to each STA, no more than one STA is active at any time. 
Differently from our work, rather than directly allocating \glspl{sp}, \cite{9247501} uses TWT  to wake up a number of STAs that 
is lower or equal to the number of disjoint sets of tones (Resource Units, or RUs), so each STA gets its own RU without colliding with others. 
\revision{Also using OFDMA, \cite{twt-scheduling-w-predictions} proposes a traffic-awareness-based TWT scheduling scheme that leverages spatio-temporal traffic prediction and classification to dynamically adjust TWT parameters and optimize resource allocation across time and frequency. Their scheme, leveraging a greedy scheduling algorithm, improves energy efficiency and QoS by tailoring wake-up intervals and durations to predicted traffic patterns.}
\cite{twt-scheduling}, instead, focuses on throughput and fairness, and proposes two \gls{twt} schedulers: a max-rate scheduler, 
which aims to maximize the overall network throughput, and a proportional fair scheduler, 
which tries to balance network throughput and fairness so that a minimal level of service is guaranteed to all users. 
Aiming at higher energy efficiency and uplink throughput, \cite{twt_obss} presents a power-saving scheme for 
overlapping \gls{bss}. However, such a scheme schedules TWT \glspl{sp} of the same duration, 
thus wasting radio resources in case of short transmissions.
Still considering high-density scenarios with overlapping BSS, \cite{10652675} introduces a TWT-based AP coordination scheme 
to avoid interference. Here, the key idea is to allocate interfering STAs to different TWT \glspl{sp}, and to 
give priority to STAs with high traffic volumes, which leads  
to increased throughput and energy efficiency. 

As for \gls{twt} applied to IIoT scenarios specifically, \cite{twt-tsn}  extends a wired \gls{tsn} network to the wireless domain, 
employing broadcast \gls{twt} to separate traffic flows according to their priority and then different RUs to isolate each flow. However, experimental results show poor performance, as the TWT SPs are emulated through WoWLAN instead of being natively implemented by the used transceivers. 
Looking instead at the performance  on Commercial Off-the-Shelf (COTS) devices, \cite{10279666} investigates the performance of TWT, in the case where 
 TWT agreements are statically configured by STAs (Android smartphones) to decrease energy consumption and are not tailored to traffic characteristics. 
To the best of our knowledge, we are the first to report a performance analysis of TWT on COTS devices where the TWT \glspl{sp} are dynamically scheduled based on the expected traffic patterns. 
\revision{The imperative to maintain data freshness has spurred significant research on \gls{aoi}. Various studies have investigated AoI minimization considering aspects such as maximum \gls{aoi} thresholds and throughput constraints \cite{aoi1}, \cite{aoi2}. However, these AoI-centric formulations typically do not incorporate energy consumption as a primary optimization objective or scheduling constraint, which is a core element of our work where \gls{twt} is central.}

Finally, a preliminary version of our work has appeared in~\cite{puligheddu2024target}, 
where however we introduced a simpler version of the TASP problem and just sketched the TASPER algorithm, 
while showing  its performance obtained via simulation only and under a smaller-scale scenario.

\noindent\textbf{Novelty.}
In summary, the innovative features of our work are two-fold. 
First, concerning energy-saving scheduling, ours is the first work on time-sensitive traffic scheduling 
that addresses energy-saving besides trying to maximize the number of scheduled transmissions.  
    Second, regarding AoI requirements,  existing scheduling problems are designed to comply with 
  delay requirements and do not account for information freshness. 
  Instead, our problem considers AoI constraints to guarantee that the delivered data is not outdated and provides 
  new, fresh information.

\section{Conclusions}
\label{sec:conclusions}
\vspace{-1mm}

In this paper, we presented a novel solution for efficient TWT scheduling in time-critical IIoT scenarios.
First, we provided a set of motivational findings, showing the advantages of TWT in guaranteeing low and deterministic latency as well as in saving energy. 
We then proposed a mathematical model of a TWT-enabled Wi-Fi network and formulated the TWT Acceptance and Scheduling Problem (TASP), proving its  NP-hardness. 
In light of the problem complexity, we envisioned an efficient heuristic algorithm, named TASPER,  and demonstrated its 
effectiveness with respect to baseline strategies. Numerical results, obtained using a realistic IIoT scenario 
and through our ns-3-based TWT simulator,  
show that TASPER achieves a 24.97\% lower mean rejection cost, and up to 14.86\% lower energy consumption than \revision{the best performing baseline.  Additionally, TASPER outperforms HSA, a state-of-the-art solution adapted from WirelessHART \cite{chen2018joint}, reducing the mean rejection cost by up to 26\%, and the energy consumption by up to 34\%.}
Finally, using our IIoT TWT-compatible testbed, we validated TASPER's effectiveness as a traffic scheduling strategy, demonstrating that 
 it admits more transmissions than simpler alternatives without incurring any AoI deadline violations.

Future work will  extend TASPER to an OFDMA scenario, where Wi-Fi transmissions are allocated in Resource Units 
spanning both the time and frequency dimensions, and  further investigate the stations' energy consumption.

\bibliographystyle{IEEEtran} 
\bibliography{main}

\vspace{-1cm}

\begin{IEEEbiography}
    {Fabio Busacca} is an assistant professor at the University of Catania, Italy. His main research interests are LPWAN protocols for the IoT, AI applied to next-generation communication networks, and underwater networks.
\end{IEEEbiography}

\vspace{-1.2cm}

\begin{IEEEbiography}
    {Corrado Puligheddu} is an assistant professor at Politecnico di Torino, Italy. His main area of interest is the application of machine learning to wireless networks, focusing on radio resource management and network orchestration.
\end{IEEEbiography}

\vspace{-1.2cm}

\begin{IEEEbiography}
    {Francesco Raviglione} is an assistant professor at Politecnico di Torino, Italy. His main areas of interest are wireless and vehicular networks.
\end{IEEEbiography}

\vspace{-1.2cm}

\begin{IEEEbiography}
    {Riccardo Rusca} is a research fellow at Politecnico di Torino, Italy.
    His main areas of interest are crowd monitoring and time sensitive networking.
\end{IEEEbiography}

\vspace{-1.2cm}

\begin{IEEEbiography}
    {Claudio Casetti} is a Full Professor with Politecnico di Torino, Italy. His research interests are vehicular networks, ITS, 5G/6G, and IoT systems.
\end{IEEEbiography}

\vspace{-1.2cm}

\begin{IEEEbiography}
    {Carla Fabiana Chiasserini} is Full Professor with  Politecnico di Torino, Italy. Her research interests are in the design, modeling, and performance evaluation of mobile networks and services.
\end{IEEEbiography}

\vspace{-1.2cm}

\begin{IEEEbiography}
    {Sergio Palazzo} is a Full Professor with the  Università di Catania, Italy. His research interests include mobile systems, wireless and satellite networks, and traffic engineering.
\end{IEEEbiography}

\end{document}